\documentclass[aps,pra,twocolumn,showpacs,amsmath,amssymb,nofootinbib,longbibliography,superscriptaddress
]{revtex4-1}

\usepackage{algorithm}
\usepackage{algpseudocode}

\usepackage[pdftex]{graphicx}
\usepackage[usenames,dvipsnames]{xcolor}
\usepackage{epstopdf}
\usepackage[hidelinks]{hyperref}

\usepackage{orcidlink}

\usepackage[normalem]{ulem}

\graphicspath{{Figures}}

 \usepackage[utf8]{inputenc}

\usepackage{beramono}
\usepackage{listings}

\usepackage{amssymb,amsmath,amsthm}

\usepackage{algorithm}
\usepackage{algpseudocode}

\usepackage{caption}
\usepackage{subcaption}
\usepackage[T1]{fontenc}

\algrenewcommand\algorithmicrequire{\textbf{Input:}}
\algrenewcommand\algorithmicensure{\textbf{Output:}}

\newtheorem{theorem}{Theorem}
\theoremstyle{definition}
\newtheorem{definition}{Definition}

\theoremstyle{remark}

\lstdefinelanguage{Julia}%
  {morekeywords={abstract,break,case,catch,const,continue,do,else,elseif,%
      end,export,false,for,function,immutable,import,importall,if,in,%
      macro,module,otherwise,quote,return,switch,true,try,type,typealias,%
      using,while},%
   sensitive=true,%
   alsoother={\$},%
   morecomment=[l]\#,%
   morecomment=[n]{\#=}{=\#},%
   morestring=[s]{"}{"},%
   morestring=[m]{'}{'},%
     literate={é}{{\'e}}1
           {è}{{\`e}}1
           {ù}{{\`u}}1
}[keywords,comments,strings]%

\usepackage{chngcntr}

\begin{document}

\title{Part of the Strassen algorithm can speed up matrix multiplication in a parallel pebbling game
}
\author{Aaron Dayton\orcidlink{0009-0003-7573-6143}}
\email{Please direct correspondence to: aarondayton@uvic.ca}
\affiliation{Department of Physics \& Astronomy, University of Victoria, Victoria, British Columbia V8P 5C2, Canada}
\author{Thomas E.~Baker\orcidlink{0000-0002-3142-0767}}
\affiliation{Department of Physics \& Astronomy, University of Victoria, Victoria, British Columbia V8P 5C2, Canada}
\affiliation{Department of Chemistry, University of Victoria, Victoria, British Columbia V8P 5C2, Canada}

\begin{abstract}

We present a novel variant of the Strassen algorithm called the Partial Strassen algorithm, which uses a fraction of the Strassen steps to perform matrix multiplication in less time than traditional implementations. The memory footprint required to implement this algorithm is provably small whether used with a single thread or in a multi-threaded context. Data comparing the Partial Strassen algorithm at depths one, two, and three with BLAS matrix multiplication show that the three-level Partial Strassen algorithm is able to perform matrix multiplication in $80\%$ the time of BLAS with $2.25$ times the memory requirement, with a theoretical improvement of $75\%$ for sufficiently large matrices. The Partial Strassen algorithm is also compared against an implementation of the traditional Strassen algorithm, and is shown to be more performant with increasing matrix size due to decreased memory consumption. An open source implementation of the Partial Strassen algorithm for arbitrary depth and rectangular matrix multiplication is also included. 

\end{abstract}
\maketitle


\section{Introduction}

The Strassen algorithm, published in 1969, is a scheme for matrix multiplication which varies from the traditional approach implemented in basic linear algebra subprograms (BLAS)~\cite{Strassen}. It has been the subject of much interest over the past 50 years, leading to the development of a plethora of matrix multiplication algorithms which all have one common objective: to outperform BLAS matrix multiplication temporally~\cite{Winograd, Coppersmith, Karstadt, Schwartz}.

To understand the algorithm, we must view matrix multiplication
from the perspective of subdividing the two input matrices and the resulting output matrix each into 4 quartered blocks. 
\begin{equation}
    \mathbf{A} \mathbf{B} = \mathbf{C}
\end{equation}
\begin{equation}
\left(\begin{array}{c|c}
    \mathbf{A}_{1,1} & \mathbf{A}_{1,2} \\
    \hline
    \mathbf{A}_{2,1} & \mathbf{A}_{2,2}
\end{array}\right)\left(\begin{array}{c|c}
    \mathbf{B}_{1,1} & \mathbf{B}_{1,2} \\
    \hline
    \mathbf{B}_{2,1} & \mathbf{B}_{2,2}
\end{array}\right) =
\left(\begin{array}{c|c}
    \mathbf{C}_{1,1} & \mathbf{C}_{1,2} \\
    \hline
    \mathbf{C}_{2,1} & \mathbf{C}_{2,2} 
\end{array}\right)
\end{equation}
From this perspective, traditional matrix multiplication would form the resultant matrix by applying the following equations:
\begin{align}\label{Ts}
    \mathbf{T}_1 =& \;\mathbf{A}_{1,1} \mathbf{B}_{1,1} \;\;\;\;\;\;\;\; \mathbf{T}_5 = \;\mathbf{A}_{2,1} \mathbf{B}_{1,1}\nonumber \\
    \mathbf{T}_2 =& \;\mathbf{A}_{1,2} \mathbf{B}_{2,1} \;\;\;\;\;\;\;\; \mathbf{T}_6 = \;\mathbf{A}_{2,2} \mathbf{B}_{2,1}\nonumber \\
    \mathbf{T}_3 =& \;\mathbf{A}_{1,1} \mathbf{B}_{1,2} \;\;\;\;\;\;\;\; \mathbf{T}_7 = \;\mathbf{A}_{2,1} \mathbf{B}_{1,2}\nonumber \\
    \mathbf{T}_4 =& \;\mathbf{A}_{1,2} \mathbf{B}_{2,2} \;\;\;\;\;\;\;\; \mathbf{T}_8 = \;\mathbf{A}_{2,2} \mathbf{B}_{2,2}
\end{align}
where these submatrices are then combined to form $\mathbf{C}$ as
\begin{align}
    \mathbf{C} = \left(\begin{array}{c|c}
        \mathbf{C}_{1,1} & \mathbf{C}_{1,2} \\
        \hline
        \mathbf{C}_{2,1} & \mathbf{C}_{2,2}
    \end{array}\right)
    = \left(\begin{array}{c|c}
        \mathbf{T}_1 + \mathbf{T}_2 & \mathbf{T}_3 + \mathbf{T}_4 \\
        \hline
        \mathbf{T}_5 + \mathbf{T}_6 & \mathbf{T}_7 + \mathbf{T}_8
    \end{array}\right)
\end{align}
such that one can see this process involves 8 submatrix multiplications and 4 submatrix additions.

The core concept in the Strassen algorithm is that a different set of equations may be applied to obtain the same result. Those equations are as follows~\cite{Strassen}:
\begin{align}\label{Ms}
    \mathbf{M}_1 =& \;(\mathbf{A}_{1,1} + \mathbf{A}_{2,2})\cdot(\mathbf{B}_{1,1} + \mathbf{B}_{2,2})\nonumber \\
    \mathbf{M}_2 =& \;(\mathbf{A}_{2,1} + \mathbf{A}_{2,2})\cdot\mathbf{B}_{1,1}\nonumber \\
    \mathbf{M}_3 =& \;\mathbf{A}_{1,1}\cdot(\mathbf{B}_{1,2} - \mathbf{B}_{2,2})\nonumber \\
    \mathbf{M}_4 =& \;\mathbf{A}_{2,2}\cdot(\mathbf{B}_{2,1} - \mathbf{B}_{1,1})\nonumber \\
    \mathbf{M}_5 =& \;(\mathbf{A}_{1,1} + \mathbf{A}_{1,2})\cdot\mathbf{B}_{2,2}\nonumber \\
    \mathbf{M}_6 =& \;(\mathbf{A}_{2,1} - \mathbf{A}_{1,1})\cdot(\mathbf{B}_{1,1} + \mathbf{B}_{1,2})\nonumber \\
    \mathbf{M}_7 =& \;(\mathbf{A}_{1,2} - \mathbf{A}_{2,2})\cdot(\mathbf{B}_{2,1} + \mathbf{B}_{2,2})
\end{align}
where these submatrices are then combined to form $\mathbf{C}$ as
\begin{equation}\label{C}
\mathbf{C} = \left(\begin{array}{cc}
    \mathbf{M}_1 + \mathbf{M}_4 - \mathbf{M}_5 + \mathbf{M}_7 & \mathbf{M}_3 + \mathbf{M}_5 \\
    \mathbf{M}_2 + \mathbf{M}_4 & \mathbf{M}_1 - \mathbf{M}_2 + \mathbf{M}_3 + \mathbf{M}_6
\end{array}\right)
\end{equation}
which requires only 7 submatrix multiplications but 18 submatrix additions. Since matrix multiplications run in $O(n^3)$ while additions run in $O(n^2)$, where $n$ is maximum matrix dimension, we should expect to obtain computational benefits for sufficiently large matrices. This is in fact the case as found by~\cite{Huang-Smith}.

Of course, one does not necessarily need to stop here. The Strassen equations can be applied again to the 7 submatrix multiplications involved, yielding the two-leveled Strassen algorithm~\cite{Strassen}. On can then imagine recursively applying the equations until matrices could not be subdivided any further, giving the recursive Strassen algorithm which runs in $O(n^{\log_2 7}) \approx O(n^{2.81})$~\cite{Strassen}.

However, a key downside of using big-O notation becomes evident if one tries to implement a recursive Strassen algorithm: constant factors and subleading-terms are hidden. For the sake of theory, recursive Strassen is a beautiful display of the power of linear algebra and algorithmic design; for the sake of application, it is less appealing. It turns out that any implementation of the recursive Strassen algorithm has a crossover point for $n$ so large that it is realistically impractical on any machine~\cite{Huang-Smith}. In a naive implementation, one will also find that it consumes an enormous amount of memory, so even when applied to a multiplication of sufficiently large matrices, the machine in use would likely overflow the cache size or run out of memory entirely. In addition, due to the large memory burden additional computation time is required for communication of stored values within the machine. 

Many variants of the Strassen algorithm have been developed to overcome these issues~\cite{Winograd, Coppersmith, Karstadt}. Limited depth Strassen algorithms are the most applicable when it comes to a strict Strassen algorithm~\cite{Huang-Smith}. For example, a particularly good algorithm which overcomes these issues is that by Schwartz and Vaknin which uses a pebbling game and alternative basis with the aim of greatly reducing the overhead from additions and communication time~\cite{Schwartz}. A limitation of the Schwartz-Vaknin variant, however, may be its applicability to massive parallelization since it relies on a sequential set of steps wherein many threads can be applied to individual steps to obtain enhanced performance. These sequential steps are performed at the bottom level many times, so in theory multiple sets of these sequential steps could be run in parallel, however two additional temporary matrices are required for each step run in parallel making memory consumption (and potentially communication time) grow with the number of matrix operations run in parallel. Whether the Schwartz-Vaknin variant is competitive when applied with massive parallelization is yet to be determined, though it is clear that advanced techniques are required to strike a balance between the set of steps which must be run sequentially and those which may be run in parallel. Our goal is to develop new techniques towards this end.

In this paper, we will introduce a novel variant of the Strassen algorithm: the Partial Strassen algorithm. This new variant is designed with the goal of reducing memory management as the first priority, the idea being that reducing memory consumption would result in less communication time between memory addresses. This focus has led to an $O(n^3)$ algorithm which is advantageous to traditional matrix multiplication for sufficiently large matrices due to a smaller constant factor in the leading term. A highlight of the algorithm is that the total memory footprint does not increase with depth (the number of times the Strassen equations are applied to a submatrix multiplication) even when multi-threading is applied. We present a simplified description of the algorithm for context, followed by the total memory footprint required, then the total number of floating point operations (flops) for multiplication and addition operations in Sec.~\ref{MotivatedPartialStrassen}. Then the full formal description of the algorithm in Sec.~\ref{PartialStrassen}. Multi-threading is discussed in Sec.~\ref{MultiSec}.

Experimental comparisons between an implementation of the Partial Strassen algorithm, the conventional Strassen algorithm, and BLAS running on a single thread are presented and analyzed in Sec.~\ref{Results}. Finally, potential extensions of the Partial Strassen algorithm are discussed in Sec.~\ref{Extensions}.

Development of the algorithm has been done in an iterative process with different design choices. All versions created will be open sourced in the StrassOPen.jl GitHub repository~\cite{strassOPen}, with the main version extending to the multiplication of rectangular matrices. 

\section{Motivation for the Partial Strassen algorithm}\label{MotivatedPartialStrassen}

We can define a hybrid algorithm that blends both traditional matrix multiplication and the Strassen equations.\footnote{This definition may also be extended to other Strassen-like algorithms, see Sec.~\ref{StrassenLikeAlgs}}


\begin{definition}[Partial Strassen algorithm]
On application of the Strassen equations~\eqref{Ms}, choose $w<7$ of the matrices to further apply the Strassen equations to for each Strassen submatrix multiplication until some depth $N \geq 1$. The remaining $7-w$ matrices of each Strassen application have the traditional matrix multiplication applied. At the bottom level (depth $N$), traditional matrix multiplication is applied to all $7$ matrices resulting from each Strassen application at level $N-1$. Combine all resulting matrices according to Eq.~\eqref{C}.
\label{PartialStrassenDef}
\end{definition}

To perform the full matrix multiplication it is recommended to start from the finest level with the smallest matrices and coarse grain up; this is the strategy we will focus on for memory management purposes. Note that arbitrary selection of the $w$ matrices in the Partial Strassen algorithm does not necessarily lead to a good strategy for memory management. The specific set of equations from Eqs.~\eqref{Ms} is to be discussed.

The implementation we pursue here is a Partial Strassen algorithm with $w=4$ that reduces memory consumption such that it is constant with respect to depth (level). From this point on, we will simply refer to our implementation as the {\it Partial Strassen algorithm} for readability, though we recognize that other Partial Strassen variants exist in accordance with Def.~\ref{PartialStrassenDef}. A pebbling game methodology~\cite{Schwartz, Boyer, Nordstrom} can be implemented with constant size buffers which are reused to perform matrix multiplication.

\begin{definition}[Pebbling game]
Concerning the assignment of a region of memory for a part of an algorithm. When memory is being used on a given operation, a pebble can be assigned corresponding to a memory allocation required in the algorithm. When the output is generated by said operation, the corresponding pebble (memory location) may be allocated to a new operation which was dependent on the completion of the prior. Once all pebbles are placed on the final output and the corresponding operation is executed, the game is completed.
\end{definition}

\begin{figure}
    \centering
    \includegraphics[width=0.6\columnwidth]{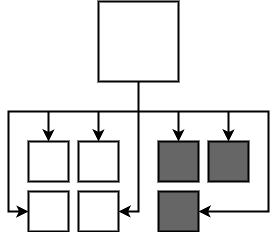}
    \caption{The memory dependence for a single application of the Strassen equations such that white nodes go into $\mathbf{C}$ and grey nodes go into the M-buffer (see text). To evaluate as a pebbling game, we start with the leaf nodes (smallest boxes) and seek to evaluate all nodes. To remove a leaf node, apply the corresponding equation from Eqs.~\eqref{Ms}. Only once all child nodes are executed may a parent node be evaluated. The resulting matrices from the leaf nodes are added according to Eq.~\eqref{C} to form $\mathbf{C}$ (the root node).
}
    \label{SingleThreadTree}
\end{figure}

\begin{figure}
    \centering
    \includegraphics[width=\columnwidth]{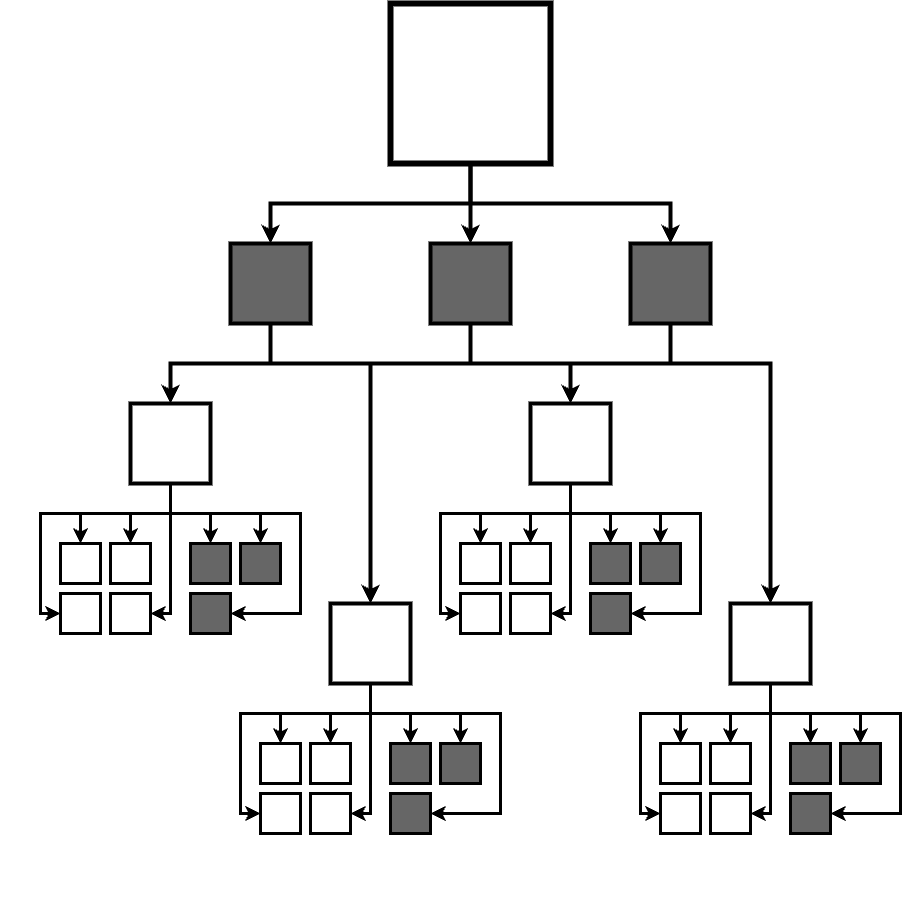}
    \caption{The memory dependence for two applications of the Strassen equations in the $w=4$ Partial Strassen scheme. The shading corresponds to the same memory organization as in Fig.~\ref{SingleThreadTree}. White nodes above the leaf nodes have the additional meaning that they were split via the Strassen equations. At the bottom level, grey nodes are stored immediately in the M-buffer without conflicts. However, at higher levels the white submatrices must be reconstructed before any grey nodes of the same depth can be computed to ensure that the M-buffer memory is not overwritten.}
    \label{SmallThreadTree}
\end{figure}

\begin{figure}
    \centering
    \includegraphics[width=\columnwidth]{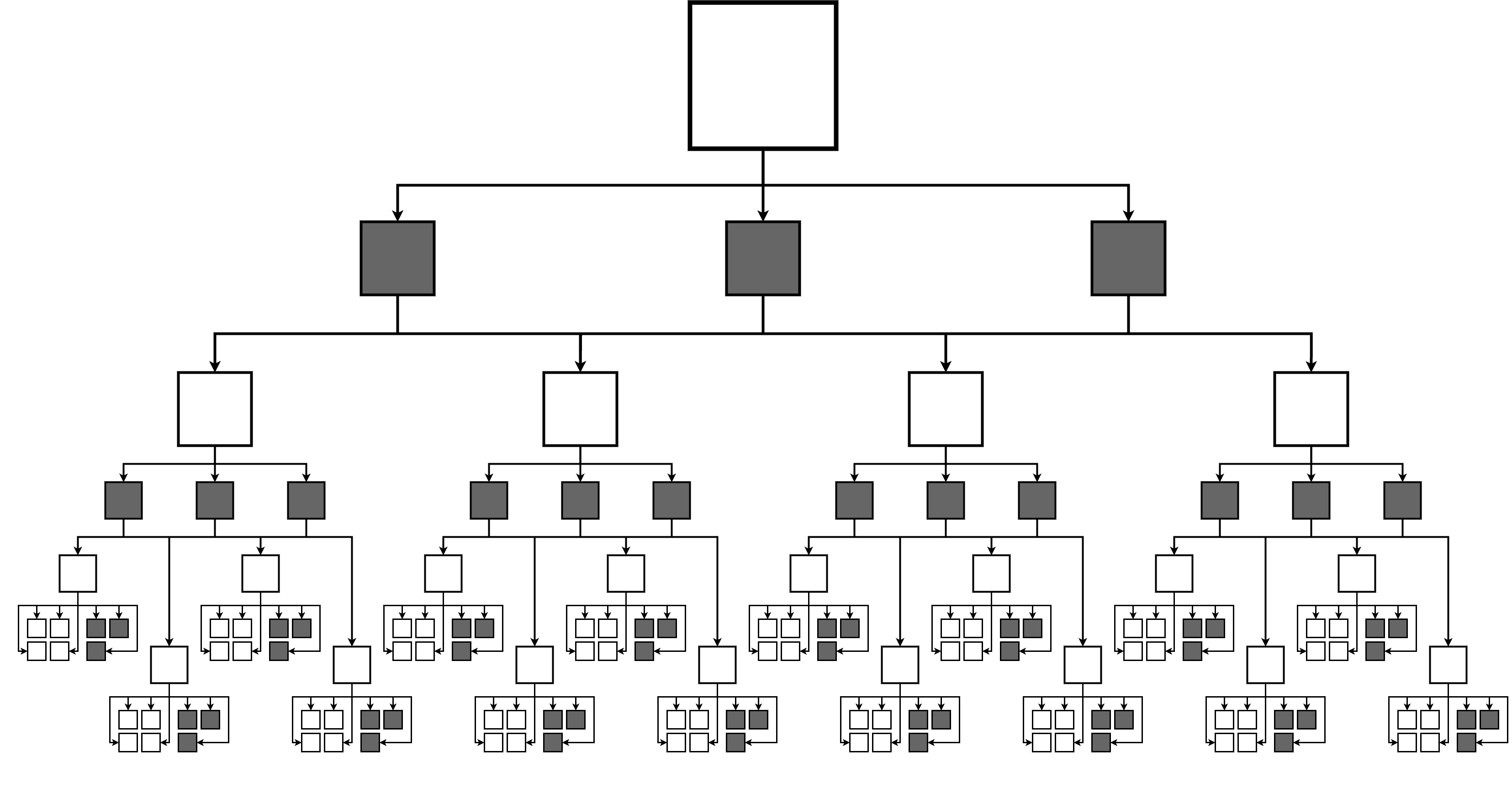}
    \caption{The memory dependence for three applications of the Strassen equations. The same rules apply as outlined in Fig.~\ref{SmallThreadTree}. All operations in a single-level can be evaluated in parallel, which is true for any depth.}
    \label{LargeThreadTree}
\end{figure}

In our implementation, the resultant matrix $\mathbf{C}$ in Eq.~\eqref{C} is used to store intermediate results throughout computation along with an additional buffer denoted $\mathbf{\mathcal{M}}$ (also called the M-buffer). This keeps the memory usage constant throughout the algorithm.

The pebbling game scheme of the Partial Strassen algorithm at depths $N=1$, 2, and 3 is depicted in Figs.~\ref{SingleThreadTree}--\ref{LargeThreadTree}, respectively, where white blocks denote matrices to be stored in $\mathbf{C}$ and grey block denote those to be stored in $\mathbf{\mathcal{M}}$. Additionally, we select white blocks to be those which are further subdivided via Eqs.~\eqref{Ms} if the maximum depth is not yet reached. Dependency indicates the order in which operations are to be performed. More details on the exact pebbling game scheme to be discussed in Sec.~\ref{MemoryManagement}. Many operations of a single-level can be performed in parallel. Hence, we extend to a {\it parallel pebbling game} in which multiple pebbles can be computed simultaneously, which will be discussed more in Sec.~\ref{MultiSec}. 

The pebbling game here utilizes a memory fractalization scheme depicted in Figs.~\ref{CFlow} and~\ref{BufferFlow}. The white blocks from Figs.~\ref{SingleThreadTree}--\ref{LargeThreadTree} are stored as depicted in Fig.~\ref{CFlow} and the grey blocks are stored in an M-buffer depicted in Fig.~\ref{BufferFlow}.

The extension of the fractal M-buffer to level $N=7$ is depicted in Fig.~\ref{HugeMBuffer}. We show the memory region in a $\Gamma$-shape for demonstration purposes. An actual implementation uses three $\frac{n^2}{4}$ sized memory regions arranged sequentially or a vectorized memory region.

\begin{figure}[H]
    \centering
    \includegraphics[width=0.8\columnwidth]{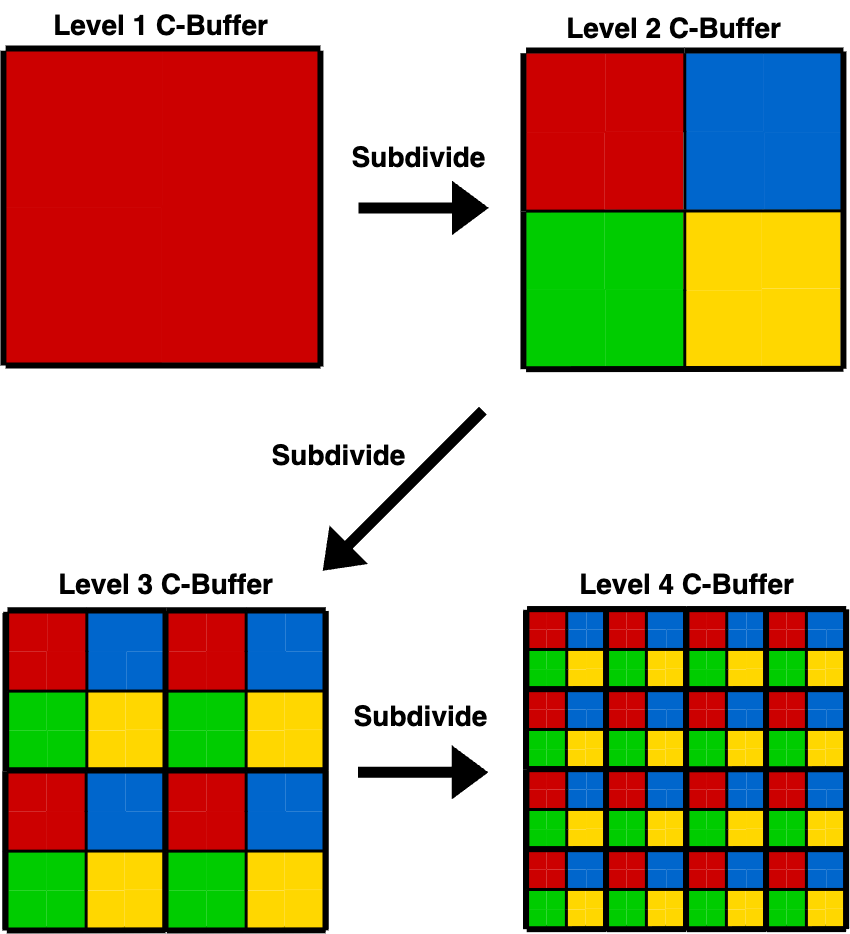}
    \caption{The C-buffer memory blocking scheme for the Partial Strassen algorithm. \textit{Level 1} contains four submatrices (all colored red) of size $\frac{n}{2} \times \frac{n}{2}$, corresponding with the white blocks for the single-level tree of Fig.~\ref{SingleThreadTree} which are to be recombined with the three red submatrices depicted in Fig.~\ref{BufferFlow} at \textit{level 1}. Each colored block at \textit{level 2} represents four submatrices of size $\frac{n}{4} \times \frac{n}{4}$ which result from an application of the Strassen equations to one of the four C-buffer submatrices of \textit{level 1} such that they are stored in the same memory region as their \textit{level 1} parent. The same pattern holds for \textit{level 3} and \textit{4}, with each applying another set of the Strassen equations to the submatrices of the previous level. This pattern holds indefinitely for all levels.
    }
    \label{CFlow}
\end{figure}

\begin{figure}[H]
    \centering
    \includegraphics[width=0.8\columnwidth]{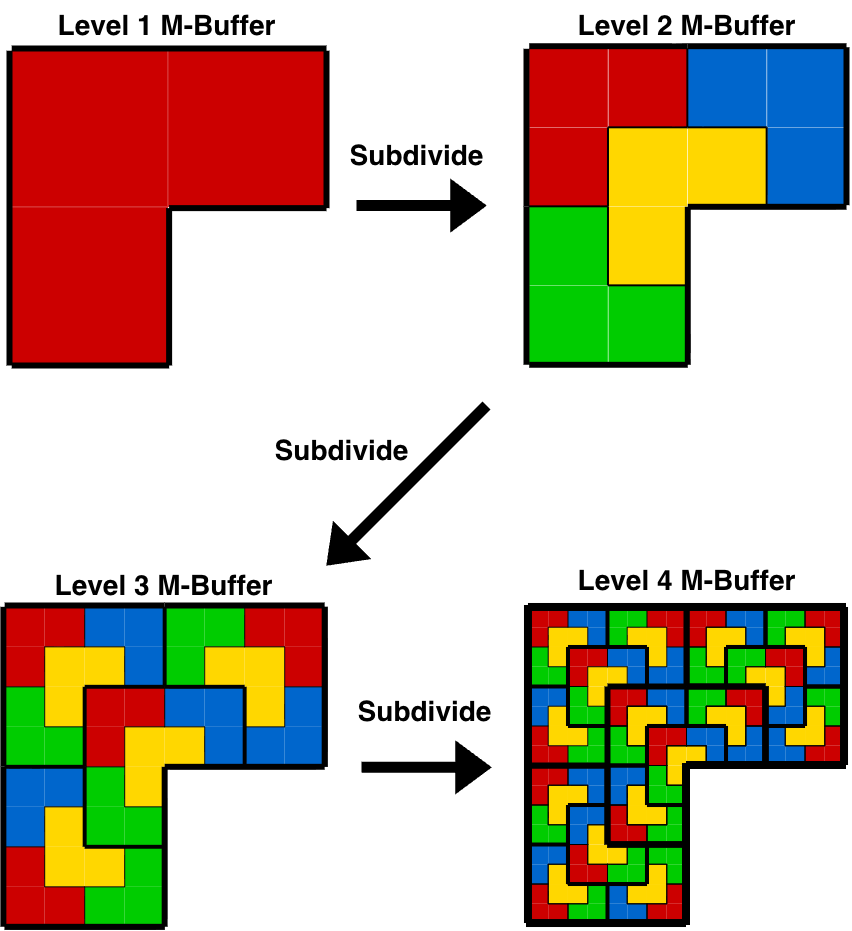}
    \caption{The M-buffer memory blocking scheme for the Partial Strassen algorithm. \textit{Level 1}  contains three submatrices (all colored red) of size $\frac{n}{2} \times \frac{n}{2}$, corresponding with the grey blocks for the single-level tree of Fig.~\ref{SingleThreadTree} which are to be recombined with the four red submatrices depicted in Fig.~\ref{CFlow} at \textit{level 1}. Each colored region at \textit{level 2} represents three submatrices of size $\frac{n}{4} \times \frac{n}{4}$ which result from an application of the Strassen equations to one of the four C-block submatrices of \textit{level 1}. The memory buffer used at \textit{level 1} gets distributed amongst all four triplets of \textit{level 2}. A similar pattern holds at \textit{levels 3} and \textit{4} (and indefinitely).
    }
    \label{BufferFlow}
\end{figure}

\begin{figure}[H]
    \centering
    \includegraphics[width=0.7\columnwidth]{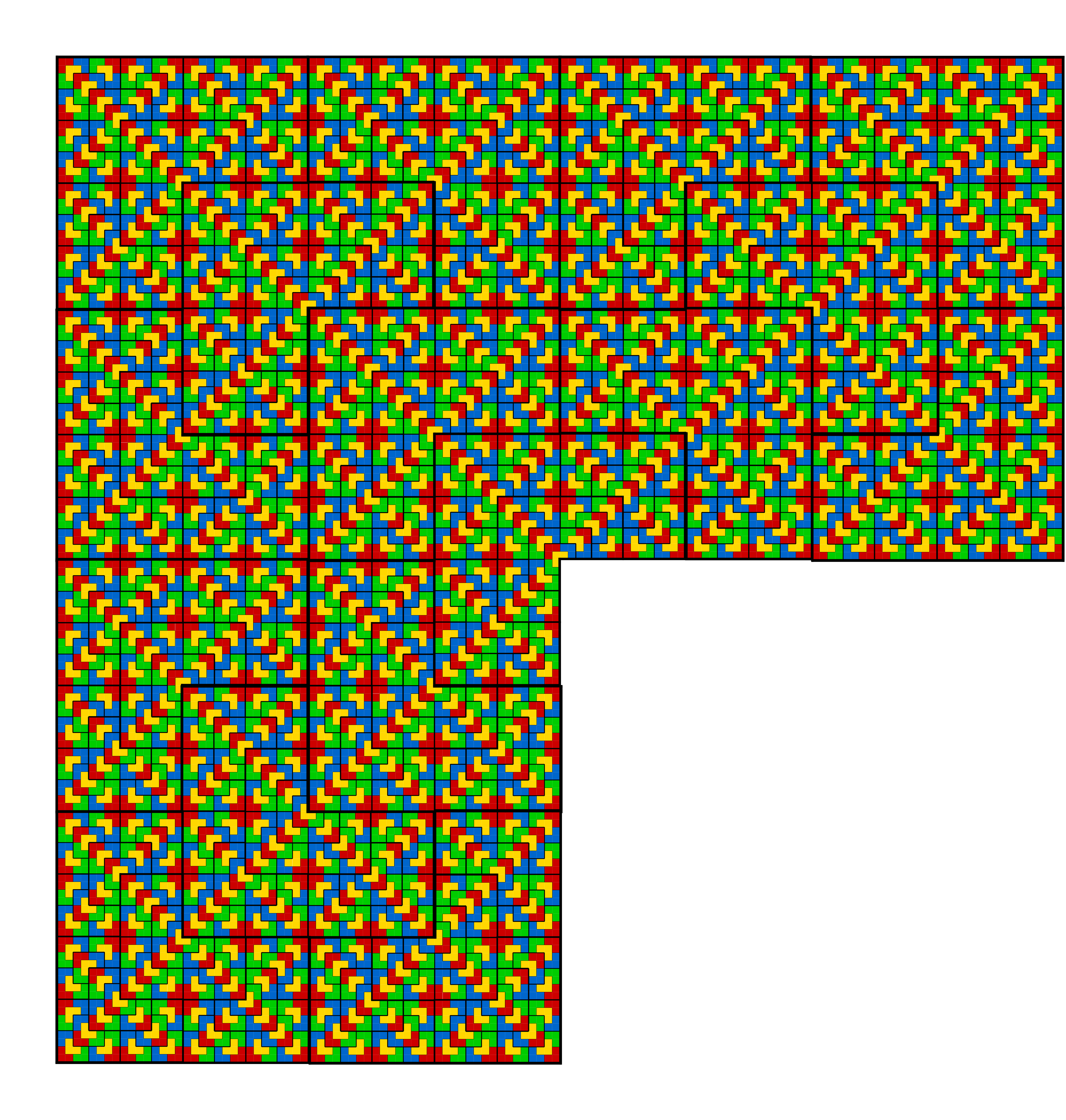}
    \caption{The fractalized M-buffer for a depth $N=7$ Partial Strassen where different colored regions indicate distinct memory regions for the storage of specific $\mathbf{M}$ submatrices. The memory in actual implementation is stored sequentially, but this illustrates the sub-divided pattern of the memory packing scheme in the algorithm.}
    \label{HugeMBuffer}
\end{figure}

The overview of the method in Alg.~\ref{simple_partial_strassen} describes the essential steps in the Partial Strassen algorithm as a broad overview for $w=4$, each of which will be expanded upon and justified in subsequent sections.

\begin{algorithm}[H]
    \caption{Simple $w=4$ Partial Strassen Algorithm}
    \label{simple_partial_strassen}
    \begin{algorithmic}[1]
        \Require Matrix $A$, matrix $B$, and depth $N$
        \State Initialize $\mathbf{C}$ of size $n^2$ \Comment{Fig.~\ref{CFlow}}
        \State Initialize $\mathbf{\mathcal{M}}$ of size $\frac{3}{4}n^2$ \Comment{Fig.~\ref{BufferFlow}}
        \State Initialize submatrix addition buffers 

        \State Allocate memory in $\mathbf{C}$ for 4 of the $\mathbf{M}_i$ \Comment{Eqs.~\eqref{Ms}} 
        \State Allocate memory in $\mathbf{\mathcal{M}}$ for the remaining $\mathbf{M}_i$ \Comment{Eqs.~\eqref{Ms}}

        \For{$1$ to $N-1$}
            \State \parbox[t]{0.8\linewidth}{Apply Eqs.~\eqref{Ms} only to matrix multiplications allocated to $\mathbf{C}$. 4 out of 7 of the submatrices produced by this division will have space allocated to their parent's region in $\mathbf{C}$ while the others have space allocated in $\mathbf{\mathcal{M}}$}
        \EndFor

        \For{$N-1$ to $1$}
            \State Compute and store leaf multiplications \Comment{Eqs.~\eqref{Ms}}
            \State Compute and store parent node additions \Comment{Eq.~\eqref{C}}
            \State Prune computed nodes from tree \Comment{Fig.~\ref{LargeThreadTree}}
        \EndFor

        \Ensure $\mathbf{C}$
    \end{algorithmic}
\end{algorithm}

The Partial Strassen algorithm can be applied to a specific depth or fully recursively (however increasing depth sees diminishing returns). The main deviation from the multi-leveled or recursive Strassen algorithm is that not all the submatrix multiplications for a given level have the Strassen equations applied (some are just computed with traditional matrix multiplication). This results in the time complexity of the algorithm scaling as $O(n^3)$. Thus, asymptotically, the scaling is just as it is with regular matrix multiplication. This algorithm is useful because the $n^3$ prefactor scales ideally as $75\%$ of the cost of regular matrix multiplication (in terms of flops) while requiring a reserved increase in memory consumption (as we will see shortly).

Notice that the single-level Partial Strassen algorithm is just a traditional single-level Strassen algorithm (with the enforced memory scheme of our implementation), though we still refer to this as a Partial Strassen algorithm for the sake of simplicity as the need to constantly distinguish between cases would be cumbersome.

\section{Partial Strassen Algorithm}\label{PartialStrassen}

With the basic ideas of how the algorithm should work, we will now rigorously prove claims on the performance and memory consumption of the algorithm for the $w=4$ algorithm. Justification will be built up for the given procedure such that the Partial Strassen algorithm is the logical endpoint.

\subsection{Memory Management}\label{MemoryManagement}

We first discuss the formal memory requirement for use of the Partial Strassen before proving time-complexity.

\begin{theorem}
The memory required for the Partial Strassen algorithm with a single thread is $2.25n^2$.
\label{memoryscaling_single}
\end{theorem}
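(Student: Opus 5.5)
We plan to count the persistent storage the algorithm reserves: the output $\mathbf{C}$ (size $n^2$), the M-buffer $\mathcal{M}$, and the submatrix addition buffers. The inputs $\mathbf{A},\mathbf{B}$ are not counted, following the convention implicit in Alg.~\ref{simple_partial_strassen}. The target decomposition is $n^2 + \tfrac{3}{4}n^2 + \tfrac{1}{2}n^2 = 2.25n^2$. Here the last term comes from two addition buffers of size $\tfrac{n^2}{4}$, one holding a sum or difference of $\mathbf{A}$ blocks and one holding a sum or difference of $\mathbf{B}$ blocks, which form the left and right operands of each product in Eqs.~\eqref{Ms}.

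The core step is an induction on the depth $N$ showing that the white and grey nodes of the trees in Figs.~\ref{SingleThreadTree}--\ref{LargeThreadTree} fit into $\mathbf{C}$ and $\mathcal{M}$ exactly as in Figs.~\ref{CFlow} and \ref{BufferFlow}.

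\emph{Base case $N=1$.} Four of the $\mathbf{M}_i$, each of size $\tfrac{n^2}{4}$, tile $\mathbf{C}$ exactly. The remaining three tile $\mathcal{M}$ exactly, which has size $\tfrac34 n^2$.

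\emph{Inductive step.} Each white block of side $m$ at level $k$ is split by Eqs.~\eqref{Ms} into seven products of size $\tfrac{m^2}{4}$. Four of them are placed in the parent's own region of $\mathbf{C}$, so $\mathbf{C}$ never grows. The other three need $\tfrac34 m^2$ in $\mathcal{M}$.

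We then have to show that these grey blocks can reuse $\mathcal{M}$ without ever needing more than $\tfrac34 n^2$ at once. The ordering argument (Fig.~\ref{SmallThreadTree}) does this. We process bottom-up, and a level-$k$ white parent is reconstructed via Eq.~\eqref{C} from its seven children before any grey node of level $k$ is computed. Hence, while the grey children of the level-$k$ white blocks occupy $\mathcal{M}$, no level-$(k-1)$ grey block is yet live. The level-$(k-1)$ grey region can therefore be partitioned among the children. There are $4^{k-1}$ white blocks at level $k-1$, each of side $n/2^{k-1}$, and their grey children need in total
\[
4^{k-1}\cdot \tfrac34 \left(\tfrac{n}{2^{k-1}}\right)^2=\tfrac34 n^2 .
\]
So each level exactly refills $\mathcal{M}$, which is the fractal picture of Fig.~\ref{HugeMBuffer}. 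After the parent white blocks are rebuilt in $\mathbf{C}$, $\mathcal{M}$ is free for the next level up.

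\emph{Main obstacle.} The delicate point is the in-place recombination in Eq.~\eqref{C}: $\mathbf{C}$'s four quadrants must be overwritten by the result while the white children $\mathbf{M}_i$ stored there are still being read. We would handle this by exhibiting an explicit sequence of in-place accumulations that uses the three grey blocks in $\mathcal{M}$ as scratch, so that no extra $\tfrac{n^2}{4}$ buffer is needed. For example, with the assignment of $\mathbf{M}_1,\mathbf{M}_3,\mathbf{M}_4,\mathbf{M}_5$ to $\mathbf{C}$, we first update the grey blocks (such as $\mathbf{M}_2+\mathbf{M}_4$) and then overwrite the quadrants in a suitable order.

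We would also check two further points. First, the leaf multiplications and the additions forming their operands need only the two fixed addition buffers, each of size at most $\tfrac{n^2}{4}$, reused sequentially, since the single thread performs one product at a time. Second, the leaf products can be written directly into their assigned slots. Summing $n^2+\tfrac34n^2+2\cdot\tfrac14 n^2$ then gives $2.25n^2$, independent of $N$.
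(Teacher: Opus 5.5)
Your proposal is correct and follows essentially the paper's argument. It uses the same split $n^2+\tfrac34 n^2+2\cdot\tfrac14 n^2$, and each depth's $3\cdot 4^{k-1}$ grey blocks exactly refill the $\tfrac34 n^2$ M-buffer because all white blocks of a depth are rebuilt before that depth's grey blocks are computed. The paper gives this ordering in Sec.~\ref{PebblingScheme} rather than inside the proof. In your ordering step, the blocks that must not yet be live are the level-$k$ grey siblings, not level-$(k-1)$.

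The one real difference is your white set $\mathbf{M}_1,\mathbf{M}_3,\mathbf{M}_4,\mathbf{M}_5$. Each of these occurs in two quadrants of Eq.~\eqref{C}, so plain in-place accumulation has no admissible first step, and you do need the grey blocks as scratch plus copy-backs. That costs no memory, so the bound stands. The paper instead stores $\mathbf{M}_2,\mathbf{M}_4,\mathbf{M}_5,\mathbf{M}_6$ in $\mathbf{C}$. Since $\mathbf{M}_6$ is used only in $\mathbf{C}_{2,2}$, it starts the scratch-free chain $\mathbf{C}_{2,2},\mathbf{C}_{2,1},\mathbf{C}_{1,1},\mathbf{C}_{1,2}$.
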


\begin{proof}
Suppose you are given two $n \times n$ matrices to multiply together. Performing Eqs.~\eqref{Ms} on the input matrices yields 7 $\mathbf{M}_i$ matrices of size $\frac{n}{2} \times \frac{n}{2}$, requires the storage of $\frac{7}{4}n^2$ floating point numbers in memory.

One could further apply the Strassen equations for each matrix multiplication involved in Eqs.~\eqref{Ms}, as is done in the traditional two-level Strassen algorithm. This would require storing 49 $\mathbf{M}_i$ matrices of size $\frac{n}{4} \times \frac{n}{4}$ for a total of $\frac{49}{16}n^2$. In general, applying the Strassen equations $N$ times for an $N$-level Strassen requires
\begin{equation}
    \left(\frac{7}{4}\right)^N n^2
\end{equation}
memory. Thus, memory consumption grows exponentially with $N$ for the multi-leveled Strassen algorithm.

However, if one chooses to reapply the Strassen equations to only 4 out of the 7 $\mathbf{M}_i$ matrices of size $\frac{n}{2} \times \frac{n}{2}$ which result from a single application of Eqs.~\eqref{Ms}, then $4 \cdot 7 = 28$ matrices of size $\frac{n}{4} \times \frac{n}{4}$ must be stored. This requires $\frac{7}{4} n^2$ memory. In general, at depth $k$ the results from submatrix multiplications require
\begin{equation}
4^k \cdot 7 \cdot \frac{n}{2^{k+1}} \cdot \frac{n}{2^{k+1}} = \frac{7}{4} n^2
\end{equation}
memory.

Addition of $\mathbf{A}$ and $\mathbf{B}$ submatrices must be performed in buffers as well. With a single thread these can be done sequentially with no addition will being larger than $\frac{n^2}{4}$, thus an additional $\frac{n^2}{2}$ memory is required for the A- and B-buffers whose contents will be multiplied together to form M-matrices according to Eqs.~\eqref{Ms}.

Thus, $\frac{9}{4}n^2$ memory is required to run the Partial Strassen algorithm with a single thread.
\end{proof}

In what follows, the actual implementation of the pebbling game for efficient memory usage is detailed.

\subsubsection{Memory pebbling scheme}\label{PebblingScheme}

The memory required for the $w=4$ algorithm is demonstrated here.

We define the notation $\mathbf{M}_{g,h}$ to indicate the $h^{\text{th}}$ $\mathbf{M}$ matrix produced by applying Eqs.~\eqref{Ms} to the matrix multiplication of the $g^{\text{th}}$ $\mathbf{M}$ matrix produced by the first application of these equations. This notation extends to any depth as $\mathbf{M}_{g,h,\ldots,z}$ where $g,\; h, \ldots,\; z$ are all integers from $1$ to $7$. Appendix~\ref{DeriveStrassenEquation} contains an example of this labelling for multi-leveled operations.

Consider applying the Strassen equations in the outlined procedure $N$ times. At the bottom level, conventional matrix multiplication is used to compute all $\mathbf{M}_{\ldots, i}$ matrices according to Eqs.~\eqref{Ms}. The results fill up $\frac{7}{4}n^2$ memory. These matrices are then added together according to equation Eq.~\eqref{C}. The result of this recombination yields the $4^{N-1}$ submatrices of size $\frac{1}{4^{N-1}}n^2$, requiring $n^2$ total memory. The $3 \cdot 4^{N-2}$ submatrices at level $N-1$ which did not have the Strassen equations applied then must be computed with conventional matrix multiplication, requiring $\frac{3}{4}n^2$ memory. This fact holds in general, at any level.

This allows us to use memory in a pebbling game strategy such that only $\frac{7}{4}n^2$ memory is used for $\mathbf{M}_{\ldots,i}$ matrices at any time. Allocate $n^2$ memory for matrices which will be subdivided via the Strassen equations and $\frac{3}{4}n^2$ memory for those which will not be subdivided. We may use the resultant matrix $\mathbf{C}$ as the $n^2$ buffer and an external M-buffer, $\mathbf{\mathcal{M}}$, for the $\frac{3}{4}n^2$ memory.

The memory dependence structure is represented with the trees presented in Figs.~\ref{SingleThreadTree}--\ref{LargeThreadTree}. The memory is split between $\mathbf{C}$ and $\mathbf{\mathcal{M}}$ as depicted in Figs.~\ref{CFlow} and~\ref{BufferFlow}. Notice that at each level in these figures there are $4^{k-1}$ sets of seven blocks split between the two buffers (where $k$ is the level number).

This memory buffer fractalization scheme along with the constraint that the outlined sets of operations must be evaluated sequentially (where all operations within a single set may be computed in parallel) results in the Partial Strassen algorithm requiring only $\frac{7}{4}n^2$ memory for $\mathbf{M}_{\ldots,i}$ matrices.

It is important which $\mathbf{M}_i$ matrices are stored where in $\mathbf{C}$ and the M-buffer. Any $\mathbf{M}_i$ matrix stored in $\mathbf{C}$ must be placed where it belongs according to Eq.~\eqref{C}. There are multiple ways to choose how matrices are stored, but we select the following:
\begin{equation}\label{Cstore}
    \mathbf{C} \leftarrow \left(\begin{array}{cc}
        \mathbf{M}_4 & \mathbf{M}_5 \\
        \mathbf{M}_2 & \mathbf{M}_6
    \end{array}\right)
\end{equation}

\begin{equation}\label{Dstore}
    \mathbf{\mathcal{M}} \leftarrow \left(\begin{array}{cc}
        \mathbf{M}_1 & \mathbf{M}_3 \\
        \mathbf{M}_7 & 0
    \end{array}\right)
\end{equation}
where $\mathbf{\mathcal{M}}$ is the M-buffer with the zeros representing unoccupied memory. $\mathbf{\mathcal{M}}$ could be stored as a vector of the three submatrices such that this zero-valued region would not exist, but we will leave it for representative purposes.

With this selection, the two-level storage scheme then takes the form
\begin{equation}\label{level2-Cstore}
    \mathbf{C} \leftarrow \left(\begin{array}{cccc}
        \mathbf{M}_{4,4} & \mathbf{M}_{4,5} & \mathbf{M}_{5,4} & \mathbf{M}_{5,5} \\
        \mathbf{M}_{4,2} & \mathbf{M}_{4,6} & \mathbf{M}_{5,2} & \mathbf{M}_{5,6} \\
        \mathbf{M}_{2,4} & \mathbf{M}_{2,5} & \mathbf{M}_{6,4} & \mathbf{M}_{6,5} \\
        \mathbf{M}_{2,2} & \mathbf{M}_{2,6} & \mathbf{M}_{6,2} & \mathbf{M}_{6,6}
    \end{array}\right)
\end{equation}
and
\begin{equation}\label{level2-Dstore}
    \mathbf{\mathcal{M}} \leftarrow \left(\begin{array}{cccc}
        \mathbf{M}_{4,1} & \mathbf{M}_{4,3} & \mathbf{M}_{5,7} & \mathbf{M}_{5,1} \\
        \mathbf{M}_{4,7} & \mathbf{M}_{6,1} & \mathbf{M}_{6,3} & \mathbf{M}_{5,3} \\
        \mathbf{M}_{2,3} & \mathbf{M}_{6,7} & 0                & 0 \\
        \mathbf{M}_{2,1} & \mathbf{M}_{2,7} & 0                & 0
    \end{array}\right).
\end{equation}

On top of this, memory is also required for the submatrix additions involved in Eqs.~\eqref{Ms}. An A-buffer and B-buffer must store these intermediate results. On a single thread, only one of Eqs.~\eqref{Ms} may be performed at a time, so just $\frac{1}{4}n^2$ memory is required each of these addition buffers. The total memory requirement for a single-threaded version is $n^2 + \frac{3}{4} n^2 + 2\frac{1}{4} n^2 = \frac{9}{4} n^2$ floating point numbers in accordance with Thm.~\ref{memoryscaling_single}.

An alternate storage method which may have applicability to GPU implementations due to specific characteristics is discussed in Appendix.~\ref{167method}.

\subsubsection{Recombination step}

Now that the memory structure has been covered, we can describe how the $\mathbf{M}_i$ matrices are recombined via Eqs.~\eqref{C}. Since some $\mathbf{M}_i$ matrices were stored in $\mathbf{C}$, we must be careful not to overwrite them before adding them to the other locations where they must be applied. This produces a sequential set of additions in the algorithm required for transitioning between levels. Note that while the additions of any set of 7 matrices from Eqs.~\eqref{Ms} must be performed sequentially, multiple distinct sets may be computed in parallel.

Consider producing all matrices according to Eqs.~\eqref{Ms} to obtain

\begin{equation}
    \mathbf{C} = \left(\begin{array}{cc}
        \mathbf{M}_4 & \mathbf{M}_5 \\
        \mathbf{M}_2 & \mathbf{M}_6
    \end{array}\right)
\end{equation}

\begin{equation}
    \mathbf{\mathcal{M}} = \left(\begin{array}{cc}
        \mathbf{M}_1 & \mathbf{M}_3 \\
        \mathbf{M}_7 & 0
    \end{array}\right)
\end{equation}
as we saw before. Since $\mathbf{M}_6$ does not need to be added to any other submatrix, we may add to it first to obtain

\begin{equation}
    \mathbf{C} = \left(\begin{array}{cc}
        \mathbf{M}_4~~ & \mathbf{M}_5 \\
        \mathbf{M}_2~~ & \mathbf{M}_6 + \mathbf{M}_1 - \mathbf{M}_2 + \mathbf{M}_3
    \end{array}\right)
\end{equation}
so that $\mathbf{M}_2$ is no longer needed. We can then compute
\begin{equation}
    \mathbf{C} = \left(\begin{array}{cc}
        \mathbf{M}_4~~                & \mathbf{M}_5 \\
        \mathbf{M}_2 + \mathbf{M}_4~~ & \mathbf{M}_6 + \mathbf{M}_1 - \mathbf{M}_2 + \mathbf{M}_3
    \end{array}\right)
\end{equation}
and then add $\mathbf{M}_4$ as
\begin{equation}
    \mathbf{C} = \left(\begin{array}{cc}
        \mathbf{M}_4 + \mathbf{M}_1 - \mathbf{M}_5 + \mathbf{M}_7 & \mathbf{M}_5 \\
        \mathbf{M}_2 + \mathbf{M}_4                               & \mathbf{M}_6 + \mathbf{M}_1 - \mathbf{M}_2 + \mathbf{M}_3
    \end{array}\right)
\end{equation}
which can then be completed to give
\begin{equation}
    \mathbf{C} = \left(\begin{array}{cc}
        \mathbf{M}_4 + \mathbf{M}_1 - \mathbf{M}_5 + \mathbf{M}_7 & \mathbf{M}_5 + \mathbf{M}_3\\
        \mathbf{M}_2 + \mathbf{M}_4                               & \mathbf{M}_6 + \mathbf{M}_1 - \mathbf{M}_2 + \mathbf{M}_3
    \end{array}\right)
\end{equation}
which is Eq.~\eqref{C} reordered by the operations we chose in sequence.

To see an example of how a given $\mathbf{M}$ of a general depth 2 Strassen algorithm is applied order to form $\mathbf{C}$, see Appendix~\ref{EquationPropagation}.

\subsection{Counting Operations}

To count the total number of flops, each matrix computation must be considered. First, flops involved in matrix multiplications will be counted, then those involved in matrix additions will be counted.

It can then be shown that the number of flops required to run the algorithm scales as follows.

\begin{theorem}\label{singlethread_time}
The number of flops to perform the submatrix multiplications in the Partial Strassen algorithm asymptotically approaches $75\%$ of those required for traditional matrix multiplication as depth $N$ increases.
\end{theorem}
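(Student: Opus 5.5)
We count only the multiplicative work done by conventional (BLAS-style) matrix multiplication at the leaves of the dependency tree of Figs.~\ref{SingleThreadTree}--\ref{LargeThreadTree}. We compare it with the $n^3$ scalar multiplications (or $2n^3$ flops, up to lower-order terms) of traditional multiplication of two $n\times n$ matrices. The additions in Eqs.~\eqref{Ms} and \eqref{C} are deferred to the separate addition count, as the theorem concerns only the submatrix multiplications. Throughout, a conventional product of two $m\times m$ blocks costs $c\,m^3$ flops to leading order, with the same constant $c$ as for the full product. The ratio is therefore independent of $c$.

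The key step is to set up the level-by-level count using the structure in Def.~\ref{PartialStrassenDef} with $w=4$. At level $k$ ($1\le k\le N$) there are $4^{k-1}$ applications of Eqs.~\eqref{Ms}, each acting on blocks of size $n/2^{k-1}$ and producing 7 products of size $m_k=n/2^k$. This is the same $4^{k-1}$ count of sets of seven blocks noted in the memory-pebbling discussion. For $k<N$, only $3$ of these $7$ products are computed conventionally, namely the grey nodes stored in $\mathbf{\mathcal{M}}$. This gives $3\cdot 4^{k-1}$ conventional products of size $m_k$, at cost $3\cdot 4^{k-1}\,c\,n^3/8^k = \tfrac{3}{8}\,c n^3 (1/2)^{k-1}$. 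At the bottom level $k=N$, all $7$ are conventional, giving $7\cdot 4^{N-1}\,c\,n^3/8^N = \tfrac{7}{8}\,c n^3 (1/2)^{N-1}$.

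Summing, the total multiplicative cost relative to $c n^3$ is
\begin{equation*}
R(N)=\frac{3}{8}\sum_{k=1}^{N-1}\left(\frac12\right)^{k-1}+\frac{7}{8}\left(\frac12\right)^{N-1}
=\frac{3}{4}\left(1-2^{-(N-1)}\right)+\frac{7}{8}\,2^{-(N-1)} .
\end{equation*}
This simplifies to $R(N)=\tfrac34+\tfrac18\,2^{-(N-1)}$. As sanity checks, $N=1$ gives $7/8$, the ordinary one-level Strassen saving, and $N=2$ gives $13/16$. Since $R(N)$ decreases monotonically to $3/4$ as $N\to\infty$, the multiplication flops approach $75\%$ of the traditional count, as claimed.

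The main obstacle is justifying that the leading-order cost of each conventional block product is exactly $c\,m^3$ with a uniform constant. With the $2m^3-m^2$ flop count of traditional multiplication, the lower-order $-m^2$ terms must be handled. Summed over all leaves these terms contribute at most $O(4^N (n/2^N)^2)$ per level, i.e. $O(N n^2)$ in total. This is negligible relative to $n^3$ for fixed $N$ as $n\to\infty$, so I would state the result as the limit in $N$ of the leading-order ratio. I would also note explicitly that $n$ is assumed divisible by $2^N$ (or padded), so that the block sizes $m_k$ are exact.
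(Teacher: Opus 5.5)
Your proof is correct and follows essentially the same route as the paper. You count conventional block products level by level ($3\cdot 4^{k-1}$ products of size $n/2^k$ above the bottom, all $7\cdot 4^{N-1}$ at level $N$) and sum the geometric series to $\left(\frac{3}{4}+\frac{1}{8}\,2^{-(N-1)}\right)n^3$. The differences are cosmetic: the paper splits the bottom level into $3\cdot 4^{N-1}$ M-buffer products plus $4^N$ C-block products rather than counting all $7\cdot 4^{N-1}$ together, and your remark on the $O(Nn^2)$ lower-order terms is an addition the paper does not make.
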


\begin{proof}

At level $k$ there will be $3 \cdot 4^{k-1}$ matrix multiplications required for the $\mathbf{M}_{\ldots,1}$, $\mathbf{M}_{\ldots,6}$, and $\mathbf{M}_{\ldots,7}$ submatrices of size $\frac{n}{2^k} \times \frac{n}{2^k}$ which will be stored in the M-buffer. This leads to the following statement

For $N$ total levels, the number of multiplication flops for external matrices is
\begin{equation}\label{ExternalMults}
    T_{\text{external}}(N) = \sum_{k=1}^{N} 3 \cdot 4^{k-1} \left( \frac{n}{2^k} \right)^3
\end{equation}

At the depth $N$, conventional matrix multiplication is applied to all $\mathbf{M}_{\ldots,i}$ submatrices. $3 \cdot 4^{N-1}$ of these multiplications have been accounted for in Eq.~\eqref{ExternalMults}. This leaves $4^{N}$ more multiplications for matrices of size $\frac{n}{2^N} \times \frac{n}{2^N}$, yielding an operation count for bottom level C-block submatrices as 

\begin{equation}\label{BottomMults}
    T_{\text{Cblock}}(N) = 4^N \left( \frac{n}{2^N} \right)^3 
\end{equation}

Notice that these multiplications for submatrices stored in $\mathbf{C}$ only happen at the bottom level since all submatrices stored in $\mathbf{C}$ at higher levels are formed via addition of submatrices from the preceding level and those stored in the M-buffer.

The total number of multiplication flops is then
\begin{align}\label{MultiplicationFlops}
    T_{\text{mult}}(N) &= T_{\text{external}}(N) + T_{\text{Cblock}}(N) \nonumber \\
      &= \left[ \sum_{k=1}^{N} 3 \cdot 4^{k-1} \left( \frac{n}{2^k} \right)^3 \right] + 4^N \left( \frac{n}{2^N} \right)^3 \nonumber \\
      &= \left[ 7 \sum_{k=1}^{N} 4^{k-1} \left( \frac{n}{2^k} \right)^3 \right] - \left[ 4 \sum_{k=1}^{N-1} 4^{k-1} \left( \frac{n}{2^k} \right)^3 \right] \nonumber \\
      &= \left[ 7 \sum_{k=1}^{N} \frac{4^{k-1}}{4^k} \frac{n^3}{2^k} \right] - \left[ 4 \sum_{k=1}^{N-1} \frac{4^{k-1}}{4^k} \frac{n^3}{2^k} \right] \nonumber \\
      &= \left(\frac{7}{8} \left[ \sum_{k=0}^{N-1} \frac{1}{2^k} \right] - \frac{1}{2} \left[ \sum_{k=0}^{N-2} \frac{1}{2^k} \right]\right)n^3 \nonumber \\
      &= \left(\frac{7}{4} \left( 1 - \left(\frac{1}{2}\right)^{N} \right) - \left( 1 - \left(\frac{1}{2}\right)^{N-1} \right) \right)n^3 \nonumber \\
      &= \left(\frac{3}{4} - \frac{7}{4}\left(\frac{1}{2}\right)^{N} + \left(\frac{1}{2}\right)^{N-1} \right)n^3 \nonumber \\
      &= \left(\frac{3}{4} + \frac{1}{8}\left(\frac{1}{2}\right)^{N-1} \right)n^3
\end{align}
For $N=1$ we obtain $T_{\text{mult}}(1) = 0.875 n^3$ in agreement with the single-level Strassen algorithm~\cite{Strassen}, and for sufficiently large $N$ this converges to $0.75 n^3$.
\end{proof}

There is also a sub-leading cost to constructing the $\mathbf{M}_{\ldots,i}$ matrices from the additions involved in Eqs.~\eqref{Ms}. The number of addition flops incurred for this step follows a complicated pattern and so are given up to depth $N=9$ in Table~\ref{AdditionOps} as obtained by computational means (the algorithm was implemented and the number of addition operations were counted directly). This table discounts the initial storage of a matrix in any buffer region as these are technically not floating point operations, but rather communication operations. Only additions are counted.

\begin{table}
    \begin{center}
        \begin{tabular}{|| l | l ||}
            \hline
            \textit{\textbf{Depth N}} & \textit{\textbf{Addition operations of Eqs.~\eqref{Ms}}} \\
            \hline\hline
            $1$ & $2.5 n^2$ \\
            $2$ & $7.5 n^2$ \\
            $3$ & $16.625 n^2$ \\
            $4$ & $32.594 n^2$ \\
            $5$ & $59.977 n^2$ \\
            $6$ & $106.490 n^2$ \\
            $7$ & $185.217 n^2$ \\
            $8$ & $318.419 n^2$ \\
            $9$ & $544.105 n^2$ \\
            \hline
        \end{tabular}
    \end{center}
    \caption{Theoretical number of addition flops required for level 1-9 of the Partial Strassen Algorithm from computation of Eqs.~\eqref{Ms} rounded to 3 significant figures.}
    \label{AdditionOps}
\end{table}

Part of the total addition operations for a given level which can be counted explicitly in a compact formula are those of the C-buffer submatrix construction.

\begin{theorem}
The total number of matrix additions for matrix recombination of the C-buffer in the Partial Strassen algorithm is $2 N n^2$.
\end{theorem}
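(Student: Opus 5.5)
We count the addition flops spent on the recombination of Eq.~\eqref{C}, one level at a time, and then sum over the $N$ levels. In this count a submatrix addition of size $m\times m$ costs $m^2$ flops. The first step is to fix the cost of one recombination. Using the sequential schedule of the recombination step, one application of Eq.~\eqref{C} to a set of seven $m\times m$ matrices takes exactly eight submatrix additions or subtractions:
\begin{itemize}
\item three updates to the lower-right block ($+\mathbf{M}_1$, $-\mathbf{M}_2$, $+\mathbf{M}_3$);
\item one update to the lower-left block ($+\mathbf{M}_4$);
\item three updates to the upper-left block ($+\mathbf{M}_1$, $-\mathbf{M}_5$, $+\mathbf{M}_7$);
\item one update to the upper-right block ($+\mathbf{M}_3$).
\end{itemize}
The in-place placement in $\mathbf{C}$ and $\mathbf{\mathcal{M}}$ of Eqs.~\eqref{Cstore}--\eqref{Dstore} means no further copies are needed. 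Initial stores into the buffers are not counted, following the convention adopted for Table~\ref{AdditionOps}. So one recombination at block size $m$ costs $8m^2$ flops.

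The second step is to count recombinations per level. By the fractal structure of Figs.~\ref{CFlow} and~\ref{BufferFlow}, level $k$ has $4^{k-1}$ sets of seven blocks, each of size $\frac{n}{2^k}\times\frac{n}{2^k}$. This is the same counting already used in the proofs of Thm.~\ref{memoryscaling_single} and Thm.~\ref{singlethread_time}. Each such set is recombined exactly once when we coarse-grain from level $k$ to level $k-1$.

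The third step multiplies these together. The cost at level $k$ is
\begin{equation}
4^{k-1}\cdot 8\left(\frac{n}{2^k}\right)^2 = 2n^2,
\end{equation}
which does not depend on $k$. Summing over $k=1,\dots,N$ gives $2Nn^2$. This mirrors the level-independence of the $\frac{7}{4}n^2$ memory count.

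The main obstacle is making sure that ``additions'' is counted consistently with the claimed constant. The concern is that one might count differently, for example as $4$ output blocks, or include the $\mathbf{M}_6$ term differently. I would check the $N=1$ case directly: the eight block operations of size $\frac{n}{2}$ in the displayed recombination sequence should total $2n^2$. I would also confirm that no extra copying is hidden at the higher levels. Because of the memory layout, the parent $\mathbf{M}$ matrices placed in $\mathbf{C}$ are exactly the recombined outputs of the children, so each level adds only its own eight additions per set and no transfer cost.
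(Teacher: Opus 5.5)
Your proposal is correct and follows essentially the same route as the paper: at level $k$ there are $4^{k-1}$ groups of seven blocks of size $\frac{n}{2^k}\times\frac{n}{2^k}$. Eight block additions per group then give $4^{k-1}\cdot 8\left(\frac{n}{2^k}\right)^2 = 2n^2$ per level, hence $2Nn^2$ in total. Your explicit enumeration of the eight additions from the sequential recombination schedule is a useful check that the paper leaves implicit.
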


\begin{proof}
At the $k^{\text{th}}$ level $7 \cdot 4^{k-1}$ submatrices must be combined via Eq.~\eqref{C} involving 8 submatrix additions of size $\frac{n}{2^k} \times \frac{n}{2^k}$. So for $N$ levels these additions cost the following number of flops:
\begin{align}
    T_{\text{construction}}& = 8 \sum_{k=1}^N 4^{k-1} \left( \frac{n}{2^k} \right)^2 \nonumber \\
        &= 2 \sum_{k=1}^N n^2 \nonumber \\
        &= 2 N n^2
\end{align}
\end{proof}

The total operation count for up to depth $N=9$ is given in Table~\ref{AllOps}. Notice that there are diminishing returns on the factor associated with the $n^3$ term with increasing depth.

\begin{table}
    \begin{center}
        \begin{tabular}{|| l | l ||}
            \hline
            \textit{\textbf{Depth N}} & \textit{\textbf{Total floating point operations}} \\
            \hline\hline
            $1$ & $0.875 n^3 + 4.5 n^2$ \\
            $2$ & $0.813 n^3 + 11.5 n^2$ \\
            $3$ & $0.781 n^3 + 22.625 n^2$ \\
            $4$ & $0.766 n^3 + 40.594 n^2$ \\
            $5$ & $0.758 n^3 + 69.977 n^2$ \\
            $6$ & $0.754 n^3 + 118.490 n^2$ \\
            $7$ & $0.752 n^3 + 199.217 n^2$ \\
            $8$ & $0.751 n^3 + 334.419 n^2$ \\
            $9$ & $0.750 n^3 + 562.105 n^2$ \\
            \hline
        \end{tabular}
    \end{center}
    \caption{Theoretical total number of flops required for level 1-9 of the Partial Strassen Algorithm rounded to 3 significant figures.}
    \label{AllOps}
\end{table}

\subsection{Definition of the Partial Strassen Algorithm}\label{FullPartialStrassen}

Now that we have worked out the technical details involved in the algorithm and proven several limiting behaviours, we may present the algorithm in more detail. There are two algorithms which we will present here. First, how to obtain the relevant equations and memory locations in Alg.~\ref{partial_strassen_eqs} (this only ever needs to be computed once), then how to compute matrix multiplication via those equations in Alg.~\ref{partial_strassen}

\begin{algorithm}[H]
    \caption{Partial Strassen Equation Derivation Procedure}
    \label{partial_strassen_eqs}
    \begin{algorithmic}[1]
        \State Apply Eqs.~\eqref{Ms} to a matrix multiplication. Allocate $\mathbf{M}_2$, $\mathbf{M}_4$, $\mathbf{M}_5$, and $\mathbf{M}_6$ to $\mathbf{C}$, and allocate $\mathbf{M}_1$, $\mathbf{M}_3$, and $\mathbf{M}_7$ to the M-buffer as outlined in Sec.~\ref{MemoryManagement}. Blocks in $\mathbf{C}$ are allocated such that they align with the required location in Eq.~\eqref{C}. Do not actually perform any matrix computations, just designate the space for future use.

        \State Apply Eqs.~\eqref{Ms} to each matrix multiplication involved in $\mathbf{M}_2$, $\mathbf{M}_4$, $\mathbf{M}_5$, and $\mathbf{M}_6$. Allocate space in each parent block of $\mathbf{C}$ for $\mathbf{M}_{a, 2}$, $\mathbf{M}_{a, 4}$, $\mathbf{M}_{a, 5}$, and $\mathbf{M}_{a, 6}$, $\forall a \in \{2,4,5,6\}$; and allocate space in the M-buffer for $\mathbf{M}_{a, 1}$, $\mathbf{M}_{a, 3}$, and $\mathbf{M}_{a, 7}$ according to the memory management scheme outlined in Sec.~\ref{MemoryManagement}. Only keep track of which blocks are to be allocated where.

        \State Repeat the process of Step (4) on each $\mathbf{M}_{a, \ldots, 2}$, $\mathbf{M}_{a, \ldots, 4}$, $\mathbf{M}_{a, \ldots, 5}$, $\mathbf{M}_{a, \ldots, 6}$ component matrix until the desired depth of the algorithm is reached.

        \State Return the resulting list of operations and their associated memory locations.
    \end{algorithmic}
\end{algorithm}

\begin{algorithm}[H]
    \caption{The Partial Strassen Algorithm}
    \label{partial_strassen}
    \begin{algorithmic}[1]
        \State Initialize external fixed size buffers for submatrix additions of $\mathbf{A}$ and $\mathbf{B}$ (the A- and B-buffers) with each being of size $\frac{n}{2} \times \frac{n}{2}$ for single-threading (for more details on the multi-threading case see Sec.~\ref{MemoryMulti}).

        \State Initialize the resultant matrix $\mathbf{C}$ of size $n^2$ and the external fixed size buffer $\mathbf{\mathcal{M}}$ of size $\frac{3}{4}n^2$ for storing intermediate results.

        \State Using the information obtained from Alg.~\ref{partial_strassen_eqs}, compute the bottom level additions and multiplications to construct each $\mathbf{M}_{a, \ldots, x, y, z}$, ($a,\ldots,x,y \in \{2,4,5,6\}$; $z \in \{1,\ldots,7\}$) and store their result in the pre-allocated location.

        \State Apply Eq.~\eqref{C} to each group for 7 $\mathbf{M}_{a, \ldots, x, y, z}$ matrices with matching $a, \ldots, x, y$ and store the result to the pre-allocated locations in $\mathbf{C}$ to obtain $\mathbf{M}_{a, \ldots, x, y}$, $y \in \{2,4,5,6\}$.

        \State Perform the additions and matrix multiplications to obtain $\mathbf{M}_{a, \ldots, x, \gamma}$ where $\gamma \in \{1,3,7\}$ and store the results in the $\mathbf{\mathcal{M}}$.

        \State Repeat the process outlined in Steps (4) and (5) until all matrices are recombined to form the resultant matrix $\mathbf{C}$.

        \State Return $\mathbf{C}$.
    \end{algorithmic}
\end{algorithm}

To see an example of how to derive one of the second-level equations, an example of the recombination step, and the full set of $\mathbf{M}_{i}$, $\mathbf{M}_{j,k}$ equations of the two-level Partial Strassen algorithm see Appendix.~\ref{StrassenEquationAppendix}.

\section{Multi-threading and the Partial Strassen Algorithm}\label{MultiSec}

We have so far commented on the single-threaded version of the algorithm. The possibility for a multi-threaded version was also briefly mentioned. Here we investigate that possibility and find that the requirements for the algorithm are largely unchanged from the single-threaded version. This implies the possibility for a practical multi-threading implementation and reduction in some prefactors for such a version of the algorithm.

\subsection{Memory Requirements for Multi-Threading}\label{MemoryMulti}

In the single-threaded implementation, two buffers of size $\frac{n^2}{4}$ are required for the addition of $\mathbf{A}$ and $\mathbf{B}$ submatrices. In the multi-threading approach, multiple submatrix additions must happen in parallel. The question then becomes, how much memory is required to do this given some number of threads? It turns out, there is a ceiling on the memory required for these submatrix additions.

\begin{theorem}
A multi-threaded version of the Partial Strassen algorithm requires at most $5.25n^2$ memory if evaluating all leaf nodes at once.
\label{worstcase_threading}
\end{theorem}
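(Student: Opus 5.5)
The plan is to reuse the single-threaded accounting of Thm.~\ref{memoryscaling_single} and change only the A- and B-buffer term. In that theorem the memory splits into two parts. The first is the storage for the $\mathbf{M}_{\ldots,i}$ matrices, namely $\mathbf{C}$ ($n^2$) plus the M-buffer $\mathbf{\mathcal{M}}$ ($\frac{3}{4}n^2$). The second is the scratch space for the additions of $\mathbf{A}$ and $\mathbf{B}$ submatrices in Eqs.~\eqref{Ms}. The fractal pebbling scheme of Sec.~\ref{PebblingScheme} fixes where every product is stored, and parallel evaluation of nodes in one set does not change those locations. So the first part stays at $\frac{7}{4}n^2$ regardless of the number of threads. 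All the extra memory from threading therefore sits in the addition buffers, and the proof reduces to bounding how many products can be formed simultaneously and how large their operands are.

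The key step is to identify the worst case. The constraint from Figs.~\ref{SmallThreadTree}--\ref{LargeThreadTree} says a grey node at level $k<N$ can only be computed once the white nodes of the same group at level $k+1$ have been recombined into $\mathbf{C}$. Hence the nodes that can be leaves at the same moment are of two kinds: either all $7\cdot 4^{N-1}$ bottom-level products, or the $3\cdot 4^{k-1}$ grey products at some single higher level $k$.

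For each such simultaneous set, every product gets its own A-buffer and B-buffer slot of the size of its operands. For the bottom level each slot is $(n/2^N)\times(n/2^N)$, which gives per buffer
\begin{equation}
7\cdot 4^{N-1}\left(\frac{n}{2^N}\right)^2=\frac{7}{4}n^2 ,
\end{equation}
independent of $N$. For a higher level $k$ the same computation gives only $3\cdot 4^{k-1}(n/2^k)^2=\frac{3}{4}n^2$ per buffer. The bottom level is therefore the maximum, and the A- and B-buffers together need at most $2\cdot\frac{7}{4}n^2=\frac{7}{2}n^2$.

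Adding the two parts gives $\frac{7}{4}n^2+\frac{7}{2}n^2=\frac{21}{4}n^2=5.25n^2$. I would close by explaining why this is only an upper bound. Some products need no addition on one side: for example, $\mathbf{M}_2$ uses $\mathbf{B}_{1,1}$ directly and $\mathbf{M}_3$ uses $\mathbf{A}_{1,1}$ directly, so those operands can be read in place. Also, with fewer threads than leaves, the buffers can be reused across waves of evaluation.

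The main obstacle I expect is justifying rigorously that no schedule can have bottom-level leaves and higher-level grey leaves in flight at the same time. That would push the buffer total above $\frac{7}{2}n^2$. I would argue it from the dependency order of the pebbling game together with the fact that $\mathbf{\mathcal{M}}$ is shared. The regions of $\mathbf{\mathcal{M}}$ used by level-$k$ grey nodes are subdivided for deeper levels, as in Fig.~\ref{BufferFlow}. So a level-$k$ grey product cannot be written until the deeper results occupying that memory have been consumed in the recombination.
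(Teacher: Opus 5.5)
Your proposal is essentially the paper's proof. You keep $\mathbf{C}$ and $\mathcal{M}$ fixed at $\frac{7}{4}n^2$, and you show that running all $7\cdot 4^{N-1}$ bottom-level products in parallel needs A- and B-buffers totalling $2\cdot 7\cdot 4^{N-1}\frac{n^2}{4^N}=\frac{7}{2}n^2$, which dominates the $\frac{3}{2}n^2$ of any higher level, for $\frac{21}{4}n^2$ overall. The only difference is that the paper simply asserts that levels are computed sequentially with buffers reused, whereas you justify it from the fractal reuse of $\mathcal{M}$: when all bottom-level products are in flight, their grey outputs fill all of $\mathcal{M}$, so no higher-level grey product can be written at the same time.
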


\begin{proof}
At the bottom level $N$, if every single matrix multiplication is run in parallel, then the total amount of memory required for the submatrix additions involved in Eq.~\eqref{Ms} is
\begin{equation}
    2 \cdot 7 \cdot 4^{N-1} \frac{n^2}{4^N} = \frac{7}{2}n^2
\end{equation}
where the factor of $2$ is for the two addition buffers, $7 \cdot 4^{N-1}$ comes from the total number of blocks, and $\frac{n^2}{4^N}$ is the matrix size. For each level above, one would require memory as
\begin{equation}
    2 \cdot 3 \cdot 4^{k-1} \frac{n^2}{4^k} = \frac{3}{2}n^2
\end{equation}
where $k < N$. Each level must be computed sequentially and the memory of the lowest level can be reused for higher levels. 

So the upper bound on the total memory required for submatrix additions is $\frac{7}{2}n^2$ such that the total upper bound on memory is $n^2 + \frac{3}{4}n^2 + \frac{7}{2}n^2 = \frac{21}{4}n^2$ (though $2n^2$ of this memory would sit idle for the majority of the algorithm's runtime).
\end{proof}

This establishes an upper bound on the memory required by any multithreaded Partial Strassen algorithm. The importance is that this is not dependent on the depth chosen, just as in the single-threaded case. This does not represent the optimal scaling of the memory, however.

A choice with reduced memory consumption is to apply
\begin{equation}
\tau \equiv 3 \cdot 4^{N-1}
\end{equation}
threads (or some integer multiple of $\tau$ such that multiple threads are applied to each submatrix multiplication) such that only $\tau$ matrix multiplications can run at the bottom level. The $4^{N-1}$ comes from the number of parent white blocks in Fig.~\ref{LargeThreadTree} for a depth $N$. The factor of 3 comes from the number of grey blocks in Fig.~\ref{SingleThreadTree}. From the same logic as in Fig.~\ref{BufferFlow}, all threads are constantly applied.

\begin{theorem}
With $\tau=3 \cdot 4^{N-1}$ threads the Partial Strassen algorithm requires $3.25 n^2$ memory.
\label{bettermemory}
\end{theorem}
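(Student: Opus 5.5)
The approach mirrors the proof of Thm.~\ref{worstcase_threading}. The total memory splits into three parts: the C-buffer ($n^2$), the M-buffer ($\frac{3}{4}n^2$), and the A- and B-buffers for the submatrix additions in Eqs.~\eqref{Ms}. The first two parts are fixed by the pebbling scheme of Sec.~\ref{PebblingScheme}. They do not depend on the number of threads, because Figs.~\ref{CFlow} and~\ref{BufferFlow} already assign a distinct memory region to every $\mathbf{M}_{\ldots,i}$ at every level. Concurrent writes to different $\mathbf{M}$ matrices therefore never collide. The whole argument thus reduces to showing that, with $\tau = 3\cdot 4^{N-1}$ threads, the addition buffers need at most $\frac{3}{2}n^2$ in total. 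Then $n^2 + \frac{3}{4}n^2 + \frac{3}{2}n^2 = \frac{13}{4}n^2$.

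\textbf{Step 1: every level $k<N$.} At such a level, the computations that need A/B buffers are the $3\cdot 4^{k-1}$ grey-node multiplications, each of size $\frac{n}{2^k}\times\frac{n}{2^k}$. If all of them run simultaneously, the buffer memory is $2\cdot 3\cdot 4^{k-1}\cdot \frac{n^2}{4^k} = \frac{3}{2}n^2$, as already computed in the proof of Thm.~\ref{worstcase_threading}. This number of multiplications is at most $\tau$. If instead each of them gets a group of threads (for example $4^{N-k}$ threads per multiplication), only the multiplication is shared among the threads; each of the two addition buffers is still needed only once per product. So every level $k<N$ uses at most $\frac{3}{2}n^2$.

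\textbf{Step 2: the bottom level $N$ (the main obstacle).} There are $7\cdot 4^{N-1}$ multiplications here but only $\tau = 3\cdot 4^{N-1}$ threads. I would schedule them in waves of at most $\tau$ concurrent products:
\begin{itemize}
\item first the $3\cdot 4^{N-1}$ grey leaves;
\item then the $4^N$ white leaves, in batches of at most $\tau$.
\end{itemize}
In every wave at most $\tau$ products of size $\frac{n^2}{4^N}$ are live, each needing two buffers, so the peak buffer memory is $2\tau\frac{n^2}{4^N} = \frac{3}{2}n^2$. Finished products are stored directly into their preassigned slots in $\mathbf{C}$ or $\mathbf{\mathcal{M}}$, so the buffers can be reused between waves. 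The point that needs care is that this sequencing does not conflict with the ordering constraint of Fig.~\ref{SmallThreadTree}: the white submatrices must be reconstructed before any grey nodes of the same depth are computed. At the bottom level, however, the caption states that grey nodes are stored immediately in the M-buffer without conflicts, and all leaves of the bottom level are independent. So any wave order is valid, and the recombination via Eq.~\eqref{C} begins only after all waves finish.

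\textbf{Step 3: combining.} Levels are evaluated sequentially, so the same $\frac{3}{2}n^2$ of addition buffers is reused from the bottom level up to level $1$. Adding the fixed $n^2$ for $\mathbf{C}$ and $\frac{3}{4}n^2$ for $\mathbf{\mathcal{M}}$ gives $\frac{13}{4}n^2 = 3.25n^2$, independent of the depth $N$, which is the claimed bound.
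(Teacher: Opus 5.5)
Your proposal is correct and follows the paper's proof. The paper splits the memory the same way: $n^2$ for $\mathbf{C}$, $\frac{3}{4}n^2$ for $\mathbf{\mathcal{M}}$, and $2\cdot 3\cdot 4^{k-1}\frac{n^2}{4^k}=\frac{3}{2}n^2$ of A/B-buffer at every level $k\le N$, summing to $3.25n^2$. Your wave schedule at the bottom level only makes explicit what the paper's $k=N$ case assumes, namely that at most $\tau$ leaf products run at once.
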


\begin{proof}
Using $\tau$ threads would result in every level requiring memory for submatrix additions as
\begin{equation}
    2\cdot 3 \cdot 4^{k-1} \frac{n^2}{4^k} = \frac{3}{2}n^2
\end{equation}
where $k \leq N$. In this case, both the A-buffer and B-buffer would be of size $\frac{3}{4}n^2$.

Using $\tau$ threads results in a total memory consumption of $n^2 + \frac{3}{4}n^2 + \frac{3}{2}n^2 = 3.25 n^2$.
\end{proof}

This is opposed to $2.25 n^2$ for the single-threaded implementation or $5.25 n^2$ for the upper bound on a multi-threaded implementation where all operations are performed in parallel. 

Thus, the multi-threaded Partial Strassen variant can have a modest increase in memory consumption, $n^2$ more than the single-threaded version.  Notice that any number of threads could be applied with constant memory scaling, but the specific choice of $\tau$ threads just has special properties for this problem, in particular for the number of operations required.

\subsection{Operations Involved in Multi-Threading}\label{OperationMulti}

If the algorithm is run to depth $N$, then level-$N$ requires $7 \cdot 4^{N-1}$ M-matrices of size $\frac{n^2}{4^{N}}$ to be computed via matrix multiplication, level-$(N-1)$ requires $3 \cdot 4^{N-2}$ M-matrices of size $\frac{n^2}{4^{N-1}}$, level-$(N-2)$ requires $3 \cdot 4^{N-3}$ M-matrices of size $\frac{n^2}{4^{N-2}}$, and so on, until level-$1$ which requires the computation of $3$ M-matrices of size $\frac{n^2}{4}$. With the choice of $\tau = 3 \cdot 4^{N-1}$ threads some particularly nice properties become evident.

The bottom level of multiplications will take $7 \cdot 4^{N-1} / \tau = \frac{7}{3} \approx 2.33$ thread sweeps to compute with one thread allocated to each submatrix multiplication, where the number of thread sweeps is the number of times all threads are applied in parallel to a specific computation such as a matrix operation. For a fractional thread sweep, not all of the threads are applied and so can perform other tasks, but this can be complicated and impractical depending on the hardware used, so we will simply round up to 3 thread sweeps to obtain an upper bound. These submatrices are then combined and stored in $\mathbf{C}$ (requiring only a series of additions). Then there are $3 \cdot 4^{N-2}$ multiplications to be computed, meaning exactly $4$ threads can be applied to each M-matrix of size $\frac{n^2}{4^{N-1}}$. Notice that the matrices involved are $4$ times larger so they can be quartered with the $4$ threads to essentially obtain two times the number of DGEMM multiplications of the same size as was required at the bottom level. At level $k < N$ the number of $\frac{n^2}{4^N}$ matrix multiplications each thread must perform is $2^{N-k}$ (this is the number of sweeps for that level).

We then have that the bottom level requires $3$ thread sweeps, then all above layers require $2^{N-k}$ thread sweeps. Note that the threads compute matrix multiplications of the exact same number of elements at each level for a given sweep (these submatrices can be considered as blocks or slices). Some M-matrices will require more matrix addition operations than others (see Eq.~\eqref{Ms}), but this is a minor perturbation in the computation time for sufficiently large matrices.

\begin{theorem}
With $\tau$ threads the number of multiplication sweeps required for Partial Strassen converges to $75\%$ that of traditional matrix multiplication as depth $N$ increases.
\end{theorem}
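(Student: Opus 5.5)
The plan is to count sweeps level by level, using the sweep counts just established. I then divide by the number of sweeps that traditional matrix multiplication needs with the same $\tau = 3\cdot 4^{N-1}$ threads, where one sweep means every thread performs one multiplication of size $\frac{n}{2^N}\times\frac{n}{2^N}$. This mirrors the proof of Thm.~\ref{singlethread_time}, but counts sweeps instead of flops.

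\textbf{Step 1 (Partial Strassen).} The bottom level costs $3$ sweeps; I use the rounded-up upper bound, and also track the exact value $\frac{7}{3}$. Each level $k<N$ costs $2^{N-k}$ sweeps, as argued above. Summing the geometric series gives
\begin{equation}
S_{\text{PS}}(N) = 3 + \sum_{k=1}^{N-1} 2^{N-k} = 3 + (2^N - 2) = 2^N + 1 .
\end{equation}
With the exact bottom-level value, this becomes $2^N + \frac{1}{3}$.

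\textbf{Step 2 (baseline).} Traditional multiplication of $n\times n$ matrices, blocked into pieces of size $\frac{n}{2^N}$, consists of $8^N$ block multiplications, each equal in cost to one bottom-level DGEMM. This is consistent with Thm.~\ref{singlethread_time}, where $n^3$ corresponds to $8^N$ such units. Distributing these evenly over $\tau$ threads gives
\begin{equation}
S_{\text{trad}}(N) = \frac{8^N}{3\cdot 4^{N-1}} = \frac{4}{3}\,2^N
\end{equation}
sweeps.

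\textbf{Step 3 (ratio).} Dividing the two counts gives
\begin{equation}
\frac{S_{\text{PS}}(N)}{S_{\text{trad}}(N)} = \frac{3}{4}\left(1 + 2^{-N}\right),
\end{equation}
or $\frac{3}{4}\left(1+\frac{1}{3}2^{-N}\right)$ with the exact bottom count. Either way the ratio tends to $\frac{3}{4}$ as $N\to\infty$, and the rounding of $\frac{7}{3}$ up to $3$ does not affect the limit. As a sanity check, the ratio is compared with the flop ratio $\frac{3}{4}+\frac{1}{8}2^{1-N}$ of Thm.~\ref{singlethread_time}; both converge to $\frac{3}{4}$.

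\textbf{Main obstacle.} The delicate step is justifying the normalization in Step 2. The claim that a level-$k$ multiplication costs exactly $2^{N-k}$ bottom-size sweeps per thread relies on the quartering argument: $3\cdot 4^{k-1}$ matrices are shared among $\tau$ threads, so each matrix gets $4^{N-k}$ threads. The dimension is scaled by $2^{N-k}$, so the work grows by $8^{N-k}$, which gives $8^{N-k}/4^{N-k} = 2^{N-k}$ sweeps. The comparison is only fair if the traditional algorithm is charged with the same unit, namely perfectly divided bottom-size block products. I would state this assumption explicitly, together with the fact that sub-leading addition costs are ignored.
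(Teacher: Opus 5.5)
Your proposal is correct and follows the paper's proof essentially line for line. You use the same sweep counts, $3+\sum_{k=1}^{N-1}2^{k}$ for Partial Strassen and $\frac{4}{3}2^N$ for traditional multiplication, and your closed form $\frac{3}{4}(1+2^{-N})$ is the paper's final expression $\frac{2.25}{2^N}+\frac{3}{4}\bigl(1-(\frac{1}{2})^{N-1}\bigr)$ simplified. Your use of the exact bottom count $\frac{7}{3}$ adds something: it gives $\frac{3}{4}+\frac{1}{4}2^{-N}$, which is not just asymptotically equal to the flop ratio of Thm.~\ref{singlethread_time} but identical to it.
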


\begin{proof}
Compare this multi-threading approach to that for traditional matrix multiplication and level-$N$ Strassen. Assume that each thread performs a matrix multiplication of size $\frac{n^2}{4^{N}}$. Traditional matrix multiplication would require $8^N$ submatrix multiplications of this size, meaning that the total number of thread sweeps would be\footnote{For comparison, level-$N$ Strassen would require $7^N$ submatrix multiplications, such that the number of thread sweeps is $S_{\text{Strassen}}(N) = 7^N/(3 \cdot 4^{N-1})$.
}
\begin{equation}
    S_{\text{traditional}}(N) = \frac{8^N}{3 \cdot 4^{N-1}} = \frac{4}{3}2^N.
\end{equation}
Then, level-$N$ Partial Strassen with this threading scheme, according to our previous discussion, would require sweeps as
\begin{equation}
    S_{\text{partial\;Strassen}}(N) = 3 + \sum_{k=1}^{N-1} 2^k.
\end{equation}

If one takes the ratio between the number of sweeps required for this multi-threaded Partial Strassen algorithm and traditional matrix multiplication, then one finds
\begin{align}
    \frac{S_{\text{partial\;Strassen}}(N)}{S_{\text{traditional}}(N)} &= \frac{3 + \sum_{k=1}^{N-1} 2^k}{\frac{4}{3}2^N} \nonumber \\
    &= \frac{2.25}{2^N} + \frac{3}{4} \sum_{k=1}^{N-1} \left(\frac{1}{2}\right)^{N-k} \nonumber \\
    &= \frac{2.25}{2^N} + \frac{3}{4} \sum_{j=1}^{N-1} \left(\frac{1}{2}\right)^j \nonumber \\
    &= \frac{2.25}{2^N} + \frac{3}{8} \sum_{j=0}^{N-2} \left(\frac{1}{2}\right)^j \nonumber \\
    &= \frac{2.25}{2^N} + \frac{3}{8} \left( \frac{1 - (1/2)^{N-1}}{1-1/2} \right) \nonumber \\
    &= \frac{2.25}{2^N} + \frac{3}{4} \left(1 - \left(\frac{1}{2}\right)^{N-1} \right)
\end{align}
In the limit of large $N$, the above converges to $0.75$.
\end{proof}

This result pairs nicely with the fact that the total number of multiplication operations required for the single-threaded Partial Strassen algorithm converges to $75 \%$ that of traditional matrix multiplication with sufficiently large depth from Thm.~\ref{singlethread_time}. Of course, in both cases there is also overhead incurred from addition operations as well. This limits applicability to sufficiently large matrices, where the boundary for advantage is hardware dependent.

\subsection{Implementation on a GPU}

This paper focuses on CPU implementations (single-threaded and multi-threaded), but we briefly mention here that GPU implementations of Strassen algorithms have been shown to be effective~\cite{Lai, Huang-Yu}. Lai et al. present techniques for implementing an efficient Winograd-Strassen algorithm up to level-4 for double precision operations~\cite{Lai}. Huang et al. presents an efficient scheme for a family of Strassen algorithms which use no more global memory than GEMM by utilizing memory and thread hierarchies~\cite{Huang-Yu}. In theory, similar techniques could be applied to this variant of the Strassen algorithm.

Note that when constructing a GPU implementation of this algorithm, the convention for matrix storage presented in Appendix.~\ref{167method} may be preferable to that of Sec.~\ref{PebblingScheme}.

\section{Results and Discussion}\label{Results}

\begin{figure*}[!t]
    \begin{subfigure}{\columnwidth}
        \caption{}
        \includegraphics[width=\columnwidth]{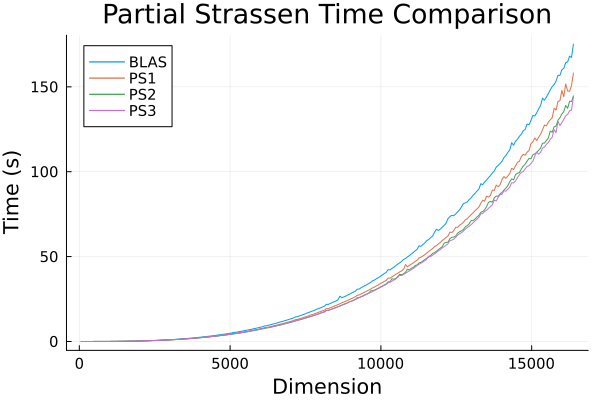}
    \end{subfigure}
    \begin{subfigure}{\columnwidth}
        \caption{}
        \includegraphics[width=\columnwidth]{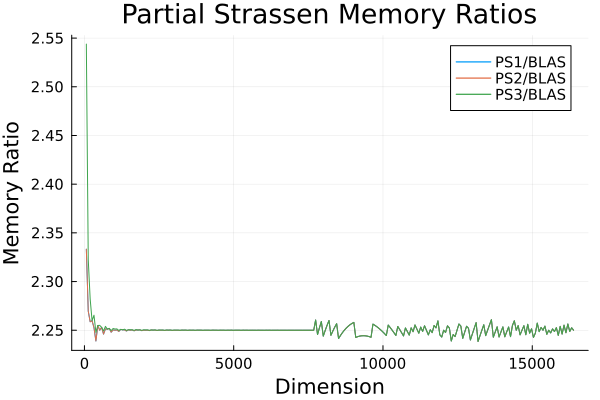}
    \end{subfigure}
    \begin{subfigure}{\columnwidth}
        \caption{}
        \includegraphics[width=\columnwidth]{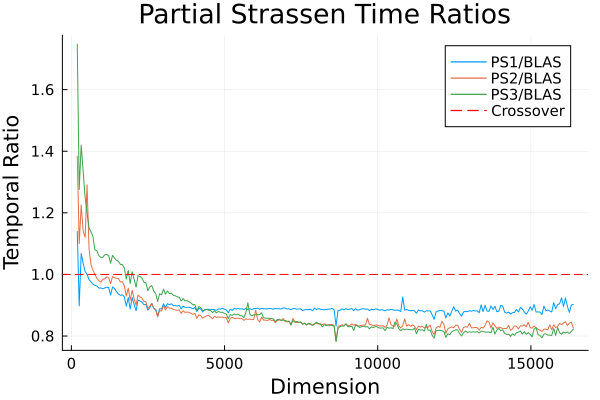}
    \end{subfigure}
    \begin{subfigure}{\columnwidth}
        \caption{}
        \includegraphics[width=\columnwidth]{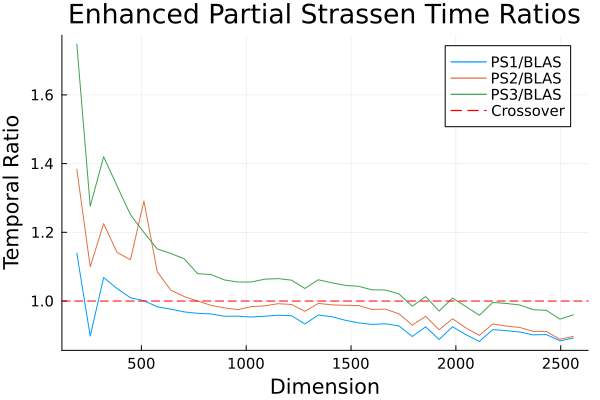}
    \end{subfigure}
    \caption{Experimental data from running the single-level, two-level, and three-level Partial Strassen algorithms, as well as with BLAS matrix multiplication for comparison, on square matrices of equal dimensions. (a) Temporal data for all four algorithms. (b) Memory consumption ratio between each of the three Partial Strassen variants and BLAS. (c) Temporal ratio between each of the three Partial Strassen variants and BLAS. (d) Enhanced version of image (c) depicting the crossover points between each Partial Strassen variant and BLAS.}
    \label{PSData}
\end{figure*}

\begin{figure*}[!t]
    \begin{subfigure}{\columnwidth}
        \caption{}
        \includegraphics[width=\columnwidth]{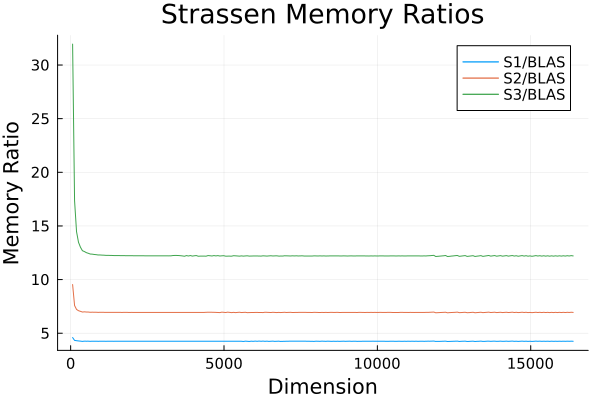}
    \end{subfigure}
    \begin{subfigure}{\columnwidth}
        \caption{}
        \includegraphics[width=\columnwidth]{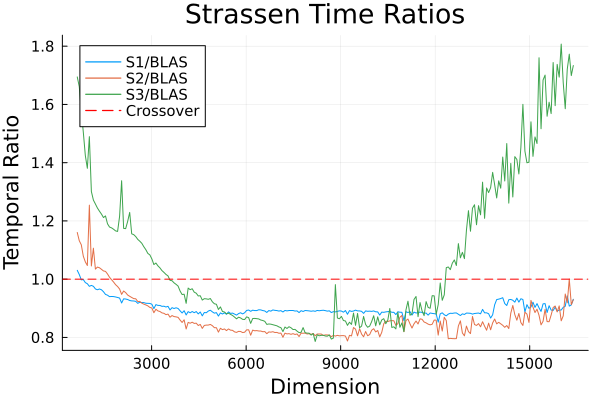}
    \end{subfigure}
    \caption{Experimental data from running the BLAS and conventional Strassen algorithm at depths 1, 2, and 3 on square matrices of equal dimensions. (a) Memory consumption ratio between each of the regular Strassen at three different levels and BLAS. (c) Temporal ratio between each of the regular Strassen algorithm at three different levels and BLAS.}
    \label{StrassenData}
\end{figure*}

\begin{figure*}[!t]
    \begin{subfigure}{\columnwidth}
        \caption{}
        \includegraphics[width=\columnwidth]{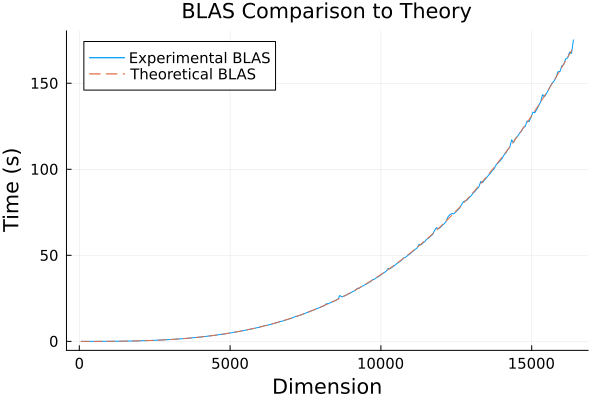}
    \end{subfigure}
    \begin{subfigure}{\columnwidth}
        \caption{}
    \includegraphics[width=\columnwidth]{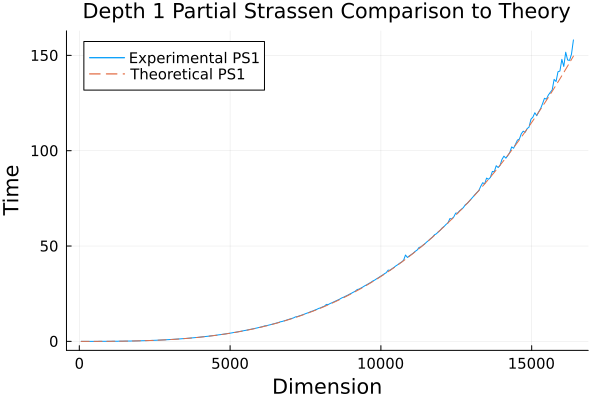}
    \end{subfigure}
    \begin{subfigure}{\columnwidth}
        \caption{}
    \includegraphics[width=\columnwidth]{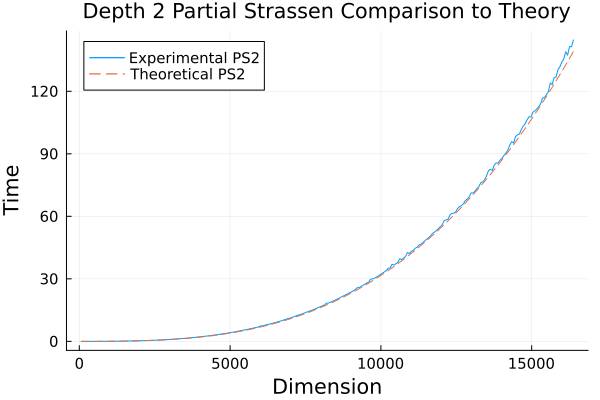}
    \end{subfigure}
    \begin{subfigure}{\columnwidth}
        \caption{}
    \includegraphics[width=\columnwidth]{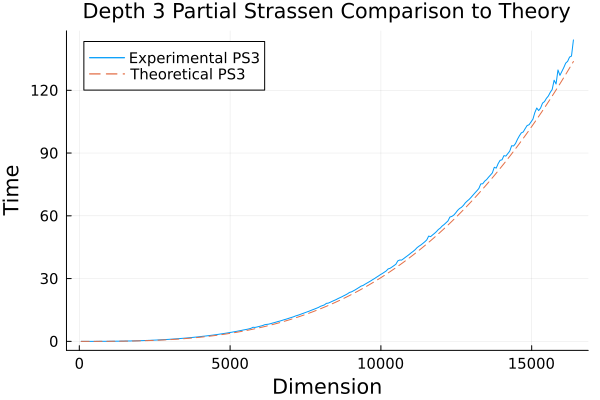}
    \end{subfigure}
    \caption{Theoretical comparisons of (a) BLAS to $n^3$ running time, (b) single-level Partial Strassen algorithm to $0.875 n^3 + 4.5 n^2$, (c) two-level Partial Strassen algorithm to $0.813 n^3 + 11.5 n^2$, three-level Partial Strassen algorithm to $0.781 n^3 + 22.625 n^2$. The scaling factor converting flops to seconds was taken to be $\alpha \approx 3.887 \times 10^{-11} seconds/flop$ found by the linear least squares method to the BLAS data. It should be noted that $O(n)$ subleading terms have been neglected in this comparison.}
    \label{TheoreticalComparison}
\end{figure*}

The Partial Strassen algorithm was run on a single thread at depths 1, 2, and 3 with BLAS matrix multiplication as comparison. The coding language used was Julia 1.9.2. All matrix multiplications were performed for square matrices of equal dimensions from $n=64$ to $16384$ at intervals of $64$ such that $256$ data points were collected. All data was obtained on an Apple M2 processor with 16 GB of RAM. Some additional considerations for implementation are contained in Appendix~\ref{Implementations}.

\subsection{Time and Memory Scaling of the Partial Strassen Algorithm}

The raw times with comparison against BLAS can be seen in Fig.~\ref{PSData}~(a). The more revealing metrics, however, are presented in Figs.~\ref{PSData}~(b, c, d) which show the ratios for metrics of memory consumption and temporal running time in comparison to BLAS.

Fig.~\ref{PSData}~(b) reveals that the amount of memory consumed stays close to the theoretically predicted value of $2.25$ times that of BLAS. Note that the spike for small matrix dimensions in two-level and three-level is due to the fact that padding is required for matrices below a certain size since the number of submatrices involved grows exponentially with depth. The small perturbations in this plot for larger matrix sizes are presumed to be due to fluctuations in the amount of memory consumed either by BLAS or Partial Strassen, notice however that all three memory consumption lines of Partial Strassen completely overlap for sufficiently large matrices.

Fig.~\ref{PSData}~(c) shows that the running time of the three-level Partial Strassen Algorithm converges to $80\%$ that of BLAS (where further improvements may be obtained for even larger matrices), in alignment with theoretical prediction when the $O(n^2)$ factors are taken into account. Fig.~\ref{PSData}~(c) presents crossover points of approximately $500$ for single-level, $750$ for two-level, and $2500$ for three-level (for the specific hardware used). We note that there is significant room for improvement in these crossover points through sophisticated optimization and hardware selection.

\subsection{Comparison with Regular Strassen}

The memory and temporal ratios of the Partial Strassen algorithm depicted in Figs.~\ref{PSData}~(b, c) can be compared with those of the traditional Strassen algorithm in Figs.~\ref{StrassenData}~(a,b). Temporal data from the traditional Strassen algorithm depicts that the two-level Strassen algorithm approaches 80\% the runtime of BLAS for smaller matrix sizes when compared against the three-level Partial Strassen algorithm, however this speedup is short-lived as matrix sizes increase. We allege that this is due to the increased memory consumption of the traditional Strassen algorithm as memory caches become saturated sooner and must begin exchanging data back and forth from RAM. This becomes especially prevalent for the three-level regular Strassen algorithm which consumes nearly 13 times the memory of BLAS and explodes in computation time as matrix sizes increase. The implementation of the regular Strassen algorithm for comparison purposes is available at Ref.~\onlinecite{strassOPen}.

Note that the regular Strassen algorithm has nearly $5n^2$ memory consumption at level one, while the worst-case scaling multi-threaded Partial Strassen algorithm requires just $5.25n^2$ and is depth-independent as from Thm.~\ref{worstcase_threading}. The memory consumption of Thm.~\ref{bettermemory} with $\tau$ threads therefore scales better than the single-level, single-threaded, regular Strassen algorithm as it requires just $3.25n^2$. This places the parallel Partial Strassen algorithm well within the range of utility (with respect to memory consumption concerns).

\subsection{Comparison of Scaling Against Derived Limits}

The theoretical number of operations presented in Table~\ref{AllOps} were used to compare against the observed times. However, a problem arises when one considers the issue of unit mismatch between the theoretical number of flops and the temporal measurements. To remedy this issue and conduct analysis, we obtained an approximate conversion scalar in units of $seconds/flop$ from the BLAS data by assuming the form

\begin{align}\label{scalefactor_expth}
    \vec{t}_\mathrm{exp.} \approx \alpha \vec{t}_\mathrm{th.}
\end{align}
by solving the linear least-squares minimization problem of $|\alpha \vec{t}_\mathrm{th.} - \vec{t}_\mathrm{exp.}|^2$ (which is performed in Julia with the line \lstinline[basicstyle=\footnotesize\ttfamily, mathescape=true]|$\alpha$ = t_th \ t_exp|). We then assumed that this scaling factor was a suitable unit of conversion from flops to seconds and applied it to the theoretical functions for the single-level, two-level, and three-level Partial Strassen algorithms. The theoretical prediction and experimental results are compared in Figs.~\ref{TheoreticalComparison}(a, b, c, d). 

Assuming that the scaling factor from Eq.~\eqref{scalefactor_expth} is accurate, significant agreement can be observed between the theoretical prediction for the Partial Strassen algorithm and the experimental results. However, the three-level variant seems to deviate slightly from prediction, costing more operations than the theoretical minimum potentially due to the absence of $O(n)$ terms incurred by algorithmic overhead (which increases with depth).

\section{Extensions}\label{Extensions}

There may exist other applications for the ideas used in the Partial Strassen algorithm. In particular, there exist other Strassen-like variants which may benefit from similar modification, and the implementation of this bounded memory fractalization scheme may be useful for specialized chip development in low latency applications as it is highly parallelizable.

\subsection{Other Strassen-like Algorithms}\label{StrassenLikeAlgs}

Ever since the Strassen algorithm was first introduced, other variants for matrix multiplication have been discovered which instead of splitting both matrices into quarters ($2 \times 2$ subregions), split them into $n \times k$ and $k \times m$ subregions in what are dubbed Strassen-like algorithms. These types of algorithms are denoted as $\langle n, k, m ; p\rangle$~\cite{Schwartz} where in $A B = C$, $n$ is the outer dimension of subblocks in $A$, $k$ is the inner dimension of subblocks, $m$ is the outer dimension of subblocks in B, and $p$ is the number of subblock multiplications required (other notations include $\langle n, k, m \rangle$ or $(n,k,m)$ each with $p$ being stated separately~\cite{AlphaEvolve}). So traditional matrix multiplication is a $\langle 2, 2, 2; 8 \rangle$, while the single-level Strassen algorithm is a $\langle 2, 2, 2; 7 \rangle$ algorithm. Many such Strassen-like algorithms have been introduced~\cite{AlphaEvolve,Hopcroft,Kauers1,Kauers2,Laderman,Moosbauer,Smirnov1,Smirnov2,Strassen2,Fawzi,Smirnov3}.

The Partial Strassen algorithm for a given depth $N$ can be written in this notation when multiplications of a given level $k$ are weighted accordingly. We can calculate this by simply scaling $8^{N}$ by Eq.~\eqref{MultiplicationFlops} for the corresponding value of $N$ to obtain values for the Partial Strassen algorithm as expressed in Table~\ref{Notations} for depths $N=1$ to $9$.

\begin{table*}[t]
    \centering
        \begin{tabular}{|| l | l | l | l ||}
            \hline
            \textit{\textbf{Depth N}} & \textit{\textbf{Traditional}} &  \textit{\textbf{Strassen}} & \textit{\textbf{Partial Strassen}} \\
            \hline\hline
            $1$ & $\langle 2, 2, 2; 8\rangle$   & $\langle 2, 2, 2; 7\rangle$  & $\langle 2, 2, 2; 7\rangle$\\
            $2$ & $\langle 4, 4, 4; 64\rangle$  & $\langle 4, 4, 4; 49\rangle$ & $\langle 4, 4, 4; 52\rangle$\\
            $3$ & $\langle 8, 8, 8; 512\rangle$ & $\langle 8, 8, 8; 343\rangle$ & $\langle 8, 8, 8; 400\rangle$\\
            $4$ & $\langle 16, 16, 16; 4096\rangle$ & $\langle 16, 16, 16; 2401\rangle$ & $\langle 16, 16, 16; 3136\rangle$\\
            $5$ & $\langle 32, 32, 32; 32768\rangle$ & $\langle 32, 32, 32; 16807\rangle$ & $\langle 32, 32, 32; 24832\rangle$\\
            $6$ & $\langle 64, 64, 64; 262144\rangle$ & $\langle 64, 64, 64; 117649\rangle$ & $\langle 64, 64, 64; 197632\rangle$\\
            $7$ & $\langle 128, 128, 128; 2097152\rangle$ & $\langle 128, 128, 128; 823543\rangle$ & $\langle 128, 128, 128; 1576960\rangle$ \\
            $8$ & $\langle 256, 256, 256; 16777216\rangle$ & $\langle 256, 256, 256; 5764801\rangle$ & $\langle 256, 256, 256; 12599296\rangle$ \\
            $9$ & $\langle 512, 512, 512; 134217728\rangle$ & $\langle 512, 512, 512; 40353607\rangle$ & $\langle 512, 512, 512; 100728832\rangle$ \\
            \hline
        \end{tabular}
    \caption{Comparison of multiplications involved in traditional matrix multiplication for depths 1 through 9, the regular Strassen algorithm, and the Partial Strassen algorithm using $\langle n, k, m ; p\rangle$ notation.}
    \label{Notations}
\end{table*}

We hypothesize that the same techniques used in deriving the Partial Strassen algorithm may be applicable to other Strassen-like algorithms. In particular, there may exist schemes which similarly allow for a subset of submatrix multiplications to be further subdivided by the relevant equations in order for a fixed size buffer to be applied with pebbling game memory fractalization techniques. This would imply highly parallelizable memory-capped Partial Strassen-like algorithms exist. When considering the fact that AlphaEvolve by Google DeepMind \cite{AlphaEvolve} was applied to solve problems in both geometry and derive Strassen-like algorithms, leveraging AI to find other partial variants becomes quite attractive.

\subsection{Strassen Processing Units}

With the historic prevalence of GPU processing and the growing need for rapid matrix multiplication computation in low latency, energy efficient, products (e.g.~autonomous vehicles and robotics) there comes a demand for faster computation. This demand may be satisfied by improving hardware or software. Here, we will briefly consider the proposition and implications of meeting this demand via hardware engineering.

The key advantage of GPUs is that they are general, can be applied to a wide variety of tasks, and have a well developed and documented coding language that allows for massive adoption. Though as lower latency is required, more specialized processing units will be required, units known as application-specific integrated circuits (ASICs). One example of an ASIC is the tensor processing unit (TPU), often used in dedicated AI/ML tasks~\cite{TPU}.

We hypothesize that the memory reduction methods presented in the Partial Strassen algorithm may be applicable to ASIC development of a massively parallel Strassen processing unit for low latency tasks since the required memory is bounded with depth and the algorithm has the potential for parallelizability. This would likely be accomplished by having a set number of tensor cores and a specialized caching structure for optimally efficient computation and memory transfer throughout the chip. These chips could then be used for rapid computation of fixed-size matrices. If Partial Strassen-like algorithms exist as was proposed in Sec.~\ref{StrassenLikeAlgs}, then highly efficient ASICs for a diverse set of matrix shapes may be possible, allowing for greater variety of application. The advantages for creating specialty chips for the Partial Strassen in comparison with its performance on a generic CPU is left for future work and is likely application dependent.

It must be noted that practical/commercial implementation of such a chip may not be viable in the near term due to two main issues. First, current chip architectures seem to fill the needs of the current market. It may not be until more specialized and well-defined applications take shape that this type of architecture becomes a requirement as lower latency becomes vital and energy consumption to commercial success in a competitive landscape. Second, chip development is a lengthy process which requires many iterative cycles in a supply chain where sophisticated chip fabrication is a bottleneck. This means that each iterative cycle would take months, decreasing the probability of near-term application. However, this gives ample reason for near-term investigation.

\section{Conclusion}\label{Conclusion}

We have presented a variant of the Strassen algorithm for matrix multiplication which can theoretically achieve matrix multiplication in $75\%$ the time of traditional implementations with just $2.25$ times the memory when run on a single thread. This algorithm is dubbed the Partial Strassen algorithm as it applies the Strassen equations to a matrix multiplication, then reapplies the Strassen equations to a fraction of submatrix multiplications until the desired depth is reached such that the amount of memory consumed does not change with depth when paired with a pebbling game memory fractalization scheme. The total amount of memory and floating point operations are given up to depth 9, where increased depth is found to give diminishing returns such that limited depth versions of the algorithm are likely to be most applicable. The algorithm is discussed in the context of multi-threading and is similarly shown to theoretically be able to perform matrix multiplication in $75\%$ of the time as traditional matrix multiplication, however this is dependent on the number of threads applied and is less straight-forward than the result for single-threading. An open source version of the algorithm for rectangular matrix multiplication is provided. The algorithm is shown experimentally to perform matrix multiplication in $80\%$ the time of BLAS at depth 3 using $2.25$ the memory on common hardware (Apple M2). Comparison is also made against an implementation of the traditional Strassen algorithm. It is noted similar memory management schemes may be applicable to other Strassen-like variants, and that energy efficient, low-latency, ASIC chips optimized for running the Partial Strassen algorithm in parallel could be produced.

\section{Acknowledgements}

We thank Sarah Huber for useful conversations.

A.D. acknowledges support from the Undergraduate Student Research Award (USRA) from NSERC. A.D. ~acknowledges the NSERC CREATE in Quantum Computing Program, grant number 543245. A.D.~is grateful for the support of the Jamie Cassel's Undergraduate Research Award (JCURA). A.D.~is grateful for support from the Mitacs Accelerate program.

This research was undertaken, in part, thanks to funding from the Canada Research Chairs Program (CRC-2021-00257). This work has been supported in part by the Natural Sciences and Engineering Research Council of Canada (NSERC) under grants RGPIN-2023-05510, DGECR-2023-00026, and ALLRP 590857-23. 

%

\onecolumngrid
\bibliography{refs,TEB_papers}

\begin{appendix}

\section{Partial Strassen Equations}\label{StrassenEquationAppendix}

It can be illustrative to see an example of how the Strassen equations are derived and applied when learning about the subject. Here we do just that, as well as present the full set of $\mathbf{M}_{i}$, $\mathbf{M}_{j,k}$ equations for the two-leveled Partial Strassen algorithm.

\subsection{Multi-leveled Strassen Equation Derivation Example}\label{DeriveStrassenEquation}

To obtain a Strassen equation of a desired depth, denoted $\mathbf{M}_{a, b, c, \ldots, z}$ where $a,b,c,\ldots,z \in \{1,\ldots,7\}$, one must first apply equation $a$ of Eqs.~\eqref{Ms}, then take the multiplication involved and apply equation $b$ of Eqs.~\eqref{Ms} to it, and so on until equation $z$ of Eqs.~\eqref{Ms} is applied. Upon the first application, sub-blocks of input matrices $\mathbf{A}$ and $\mathbf{B}$ are selected. Upon the second application sub-blocks of those sub-blocks are then selected. This process is repeated until the list $a, b, c, \ldots, z$ is exhausted. This can be done algorithmically.

$\mathbf{A}$ and $\mathbf{B}$ submatrices are denoted $\mathbf{X}^{k}_{i,j}$ where $k$ denoting the depth such that the submatrices are size $\frac{n}{2^k} \times \frac{n}{2^k}$, $i$ is the row block, and $j$ is the column block. For example, $\mathbf{A}^2_{3,1}$ is the block matrix of size $\frac{n}{4} \times \frac{n}{4}$ with starting row index $\frac{2n}{4}+1$ and starting column index $1$. 

Suppose you apply the Strassen equations to a matrix multiplication and wish apply the Strassen equations once again to the multiplication operation involved in $\mathbf{M}_2$ of Eqs.~\eqref{Ms}. The matrix $\mathbf{M}_{2,1}$ may be obtained as follows.

Take $\mathbf{M}_2 = (\mathbf{A}^1_{2,1} + \mathbf{A}^1_{2,2}) \cdot \mathbf{B}^1_{1,1}$ and apply equation $\mathbf{M}_1 = (\mathbf{A}^1_{1,1} + \mathbf{A}^1_{2,2})\cdot(\mathbf{B}^1_{1,1} + \mathbf{B}^1_{2,2})$ to subdivide the blocks of $\mathbf{A}$ and $\mathbf{B}$. For $\mathbf{A}$, submatrices $(1, 1)$ and $(2, 2)$ of $\mathbf{A}^1_{2,1}$ are $\mathbf{A}^2_{3,1}$ and $\mathbf{A}^2_{4,2}$; those of $\mathbf{A}^1_{2,2}$ are $\mathbf{A}^2_{3,3}$ and $\mathbf{A}^2_{4,4}$. For $\mathbf{B}$, submatrices $(1, 1)$ and $(2, 2)$ of $\mathbf{B}^1_{1,1}$ are $\mathbf{B}^2_{1,1}$ and $\mathbf{B}^2_{2,2}$. The resulting equation is then 

\begin{equation}
    \mathbf{M}_{2,1} = \left(\mathbf{A}^2_{3,1} + \mathbf{A}^2_{4,2} + \mathbf{A}^2_{3,3} + \mathbf{A}^2_{4,4}\right) \cdot \left(\mathbf{B}^2_{1,1} + \mathbf{B}^2_{2,2}\right)
\end{equation}
This procedure is represented pictorially in Fig.~\ref{M21}.\\

\begin{figure}[H]
    \centering
    \includegraphics[width=0.9\columnwidth]{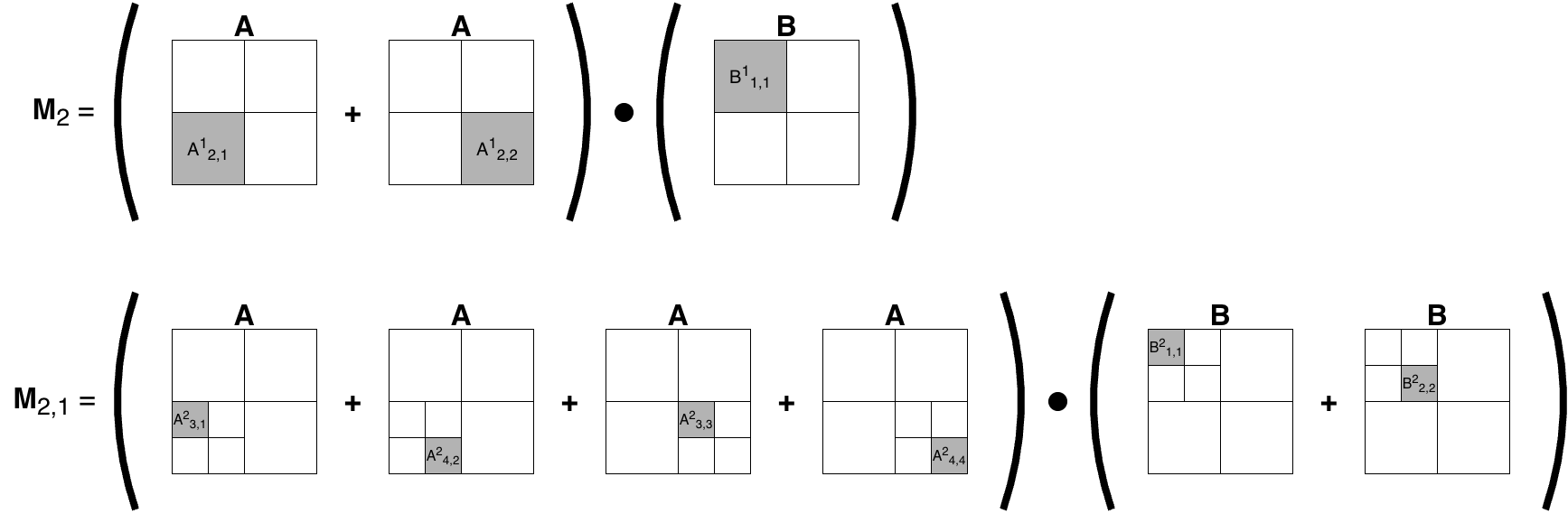}
    \caption{Diagrammatic representation of $\mathbf{M}_2$ (top) and $\mathbf{M}_{2,1}$ (bottom) which are computed by adding and multiplying submatrices of $\mathbf{A}$ and $\mathbf{B}$. Grey blocks indicate the blocks to be added then multiplied. In representation for $\mathbf{M}_2$ the grey blocks are of size $\frac{n}{2} \times \frac{n}{2}$, indicating that the matrix addition and multiplication operations occur between matrices of that size. This is similarly the case for $\mathbf{M}_{2,1}$, where block sizes are instead $\frac{n}{4} \times \frac{n}{4}$.}
    \label{M21}
\end{figure}

\subsection{Strassen Equation Propagation}\label{EquationPropagation}

Here we give an example of how an $\mathbf{M}$-matrix of a depth $N=2$ Strassen algorithm propagates up to the top level after instantiation as described in Appendix~\ref{DeriveStrassenEquation}. Take $\mathbf{M}_{2,1}$ to be the matrix of consideration. It must be combined with the other 6 $\mathbf{M}_{2,i}$ matrices   where $i \in \{2,3,\ldots,7\}$ according to Eq.~\eqref{C} to form $\mathbf{M}_2$ as
%
\begin{equation}
    \mathbf{M}_2 = \left(\begin{array}{cc}
        \mathbf{M}_{2,4} + \textcolor{blue}{\mathbf{M}_{2,1}} - \mathbf{M}_{2,5} + \mathbf{M}_{2,7} & \mathbf{M}_{2,5} + \mathbf{M}_{2,3}\\
        \mathbf{M}_{2,2} + \mathbf{M}_{2,4} & \mathbf{M}_{2,6} + \textcolor{blue}{\mathbf{M}_{2,1}} - \mathbf{M}_{2,2} + \mathbf{M}_{2,3}
    \end{array}\right)
\end{equation}
then $\mathbf{M}_2$ is combined with the other 6 $\mathbf{M}_j$ matrices where $j \in \{1,3,4,\ldots,7\}$

\begin{equation}
    \mathbf{C} = \left(\begin{array}{cc}
        \mathbf{M}_4 + \mathbf{M}_1 - \mathbf{M}_5 + \mathbf{M}_7 & \mathbf{M}_5 + \mathbf{M}_3\\
        \textcolor{blue}{\mathbf{M}_2} + \mathbf{M}_4                               & \mathbf{M}_6 + \mathbf{M}_1 - \textcolor{blue}{\mathbf{M}_2} + \mathbf{M}_3
    \end{array}\right)
\end{equation}
yielding $\mathbf{C}$ as the resultant matrix.

\subsection{The Two-Level Partial Strassen Equations}\label{TwoLevelTableAppendix}

The equations to compute all $\mathbf{M}_i$ and $\mathbf{M}_{j,k}$ matrices for a two-level Partial Strassen algorithm are presented in Table.~\ref{TwoLevel}.

\begin{table}[H]
    \begin{center}
        \begin{tabular}{|| l | l ||}
            \hline
            \textit{\textbf{Matrix}} & \textit{\textbf{Equation}} \\
            \hline\hline
            $\mathbf{M}_1$ & $(\mathbf{A}^1_{1,1} + \mathbf{A}^1_{2,2}) \cdot (\mathbf{B}^1_{1,1} + \mathbf{B}^1_{2,2})$ \\

            \hline
            $\mathbf{M}_{2,1}$ & $(\mathbf{A}^2_{3,1} + \mathbf{A}^2_{3,3} + \mathbf{A}^2_{4,2} + \mathbf{A}^2_{4,4}) \cdot (\mathbf{B}^2_{1,1} + \mathbf{B}^2_{2,2})$ \\

            $\mathbf{M}_{2,2}$ & $(\mathbf{A}^2_{4, 1} + \mathbf{A}^2_{4, 2} + \mathbf{A}^2_{4, 3} + \mathbf{A}^2_{4, 4}) \cdot (\mathbf{B}^2_{1, 1})$ \\

            $\mathbf{M}_{2,3}$ & $(\mathbf{A}^2_{3, 1} + \mathbf{A}^2_{3, 3}) \cdot (\mathbf{B}^2_{1, 2} - \mathbf{B}^2_{2, 2})$ \\

            $\mathbf{M}_{2,4}$ & $(\mathbf{A}^2_{4, 2} + \mathbf{A}^2_{4, 4}) \cdot (\mathbf{B}^2_{2, 1} - \mathbf{B}^2_{1, 1})$ \\

            $\mathbf{M}_{2,5}$ & $(\mathbf{A}^2_{3, 1} + \mathbf{A}^2_{3, 2} + \mathbf{A}^2_{3, 3} + \mathbf{A}^2_{3, 4}) \cdot (\mathbf{B}^2_{2, 2})$ \\

            $\mathbf{M}_{2,6}$ & $(\mathbf{A}^2_{4, 1} - \mathbf{A}^2_{3, 1} + \mathbf{A}^2_{4, 3} - \mathbf{A}^2_{3, 3}) \cdot (\mathbf{B}^2_{1, 1} + \mathbf{B}^2_{1, 2})$ \\

            $\mathbf{M}_{2,7}$ & $(\mathbf{A}^2_{3, 2} - \mathbf{A}^2_{4, 2} + \mathbf{A}^2_{3, 4} - \mathbf{A}^2_{4, 4}) \cdot (\mathbf{B}^2_{2, 1} + \mathbf{B}^2_{2, 2})$ \\

            \hline
            $\mathbf{M}_3$ & $(\mathbf{A}^1_{1, 1}) \cdot (\mathbf{B}^1_{1, 2} - \mathbf{B}^1_{2, 2})$ \\

            \hline
            $\mathbf{M}_{4,1}$ & $(\mathbf{A}^2_{3, 3} + \mathbf{A}^2_{4, 4}) \cdot (\mathbf{B}^2_{3, 1} + \mathbf{B}^2_{4, 2} - \mathbf{B}^2_{1, 1} - \mathbf{B}^2_{2, 2})$ \\

            $\mathbf{M}_{4,2}$ & $(\mathbf{A}^2_{4, 3} + \mathbf{A}^2_{4, 4}) \cdot (\mathbf{B}^2_{3, 1} - \mathbf{B}^2_{1, 1})$ \\

            $\mathbf{M}_{4,3}$ & $(\mathbf{A}^2_{3, 3}) \cdot (\mathbf{B}^2_{3, 2} - \mathbf{B}^2_{4, 2} - \mathbf{B}^2_{1, 2} + \mathbf{B}^2_{2, 2})$ \\

            $\mathbf{M}_{4,4}$ & $(\mathbf{A}^2_{4, 4}) \cdot (\mathbf{B}^2_{4, 1} - \mathbf{B}^2_{3, 1} - \mathbf{B}^2_{2, 1} + \mathbf{B}^2_{1, 1})$ \\

            $\mathbf{M}_{4,5}$ & $(\mathbf{A}^2_{3, 3} + \mathbf{A}^2_{3, 4}) \cdot (\mathbf{B}^2_{4, 2} - \mathbf{B}^2_{2, 2})$ \\

            $\mathbf{M}_{4,6}$ & $(\mathbf{A}^2_{4, 3} - \mathbf{A}^2_{3, 3}) \cdot (\mathbf{B}^2_{3, 1} + \mathbf{B}^2_{3, 2} - \mathbf{B}^2_{1, 1} - \mathbf{B}^2_{1, 2})$ \\

            $\mathbf{M}_{4,7}$ & $(\mathbf{A}^2_{3, 4} - \mathbf{A}^2_{4, 4}) \cdot (\mathbf{B}^2_{4, 1} + \mathbf{B}^2_{4, 2} - \mathbf{B}^2_{2, 1} - \mathbf{B}^2_{2, 2})$ \\

            \hline
            $\mathbf{M}_{5,1}$ & $(\mathbf{A}^2_{1, 1} + \mathbf{A}^2_{2, 2} + \mathbf{A}^2_{1, 3} + \mathbf{A}^2_{2, 4}) \cdot (\mathbf{B}^2_{3, 3} + \mathbf{B}^2_{4, 4})$ \\

            $\mathbf{M}_{5,2}$ & $(\mathbf{A}^2_{2, 1} + \mathbf{A}^2_{2, 2} + \mathbf{A}^2_{2, 3} + \mathbf{A}^2_{2, 4}) \cdot (\mathbf{B}^2_{3, 3})$ \\

            $\mathbf{M}_{5,3}$ & $(\mathbf{A}^2_{1, 1} + \mathbf{A}^2_{1, 3}) \cdot (\mathbf{B}^2_{3, 4} - \mathbf{B}^2_{4, 4})$ \\

            $\mathbf{M}_{5,4}$ & $(\mathbf{A}^2_{2, 2} + \mathbf{A}^2_{2, 4}) \cdot (\mathbf{B}^2_{4, 3} - \mathbf{B}^2_{3, 3})$ \\

            $\mathbf{M}_{5,5}$ & $(\mathbf{A}^2_{1, 1} + \mathbf{A}^2_{1, 2} + \mathbf{A}^2_{1, 3} + \mathbf{A}^2_{1, 4}) \cdot (\mathbf{B}^2_{4, 4})$ \\

            $\mathbf{M}_{5,6}$ & $(\mathbf{A}^2_{2, 1} - \mathbf{A}^2_{1, 1} + \mathbf{A}^2_{2, 3} - \mathbf{A}^2_{1, 3}) \cdot (\mathbf{B}^2_{3, 3} + \mathbf{B}^2_{3, 4})$ \\

            $\mathbf{M}_{5,7}$ & $(\mathbf{A}^2_{1, 2} - \mathbf{A}^2_{2, 2} + \mathbf{A}^2_{1, 4} - \mathbf{A}^2_{2, 4}) \cdot (\mathbf{B}^2_{4, 3} + \mathbf{B}^2_{4, 4})$ \\

            \hline
            $\mathbf{M}_{6,1}$ & $(\mathbf{A}^2_{3, 1} + \mathbf{A}^2_{4, 2} - \mathbf{A}^2_{1, 1} - \mathbf{A}^2_{2, 2}) \cdot (\mathbf{B}^2_{1, 1} + \mathbf{B}^2_{2, 2} + \mathbf{B}^2_{1, 3} + \mathbf{B}^2_{2, 4})$ \\

            $\mathbf{M}_{6,2}$ & $(\mathbf{A}^2_{4, 1} + \mathbf{A}^2_{4, 2} - \mathbf{A}^2_{2, 1} - \mathbf{A}^2_{2, 2}) \cdot (\mathbf{B}^2_{1, 1} + \mathbf{B}^2_{1, 3})$ \\

            $\mathbf{M}_{6,3}$ & $(\mathbf{A}^2_{3, 1} - \mathbf{A}^2_{1, 1}) \cdot (\mathbf{B}^2_{1, 2} - \mathbf{B}^2_{2, 2} + \mathbf{B}^2_{1, 4} - \mathbf{B}^2_{2, 4})$ \\

            $\mathbf{M}_{6,4}$ & $(\mathbf{A}^2_{4, 2} - \mathbf{A}^2_{2, 2}) \cdot (\mathbf{B}^2_{2, 1} - \mathbf{B}^2_{1, 1} + \mathbf{B}^2_{2, 3} - \mathbf{B}^2_{1, 3})$ \\

            $\mathbf{M}_{6,5}$ & $(\mathbf{A}^2_{3, 1} + \mathbf{A}^2_{3, 2} - \mathbf{A}^2_{1, 1} - \mathbf{A}^2_{1, 2}) \cdot (\mathbf{B}^2_{2, 2} + \mathbf{B}^2_{2, 4})$ \\

            $\mathbf{M}_{6,6}$ & $(\mathbf{A}^2_{4, 1} - \mathbf{A}^2_{3, 1} - \mathbf{A}^2_{2, 1} + \mathbf{A}^2_{1, 1}) \cdot (\mathbf{B}^2_{1, 1} + \mathbf{B}^2_{1, 2} + \mathbf{B}^2_{1, 3} + \mathbf{B}^2_{1, 4})$ \\

            $\mathbf{M}_{6,7}$ & $(\mathbf{A}^2_{3, 2} - \mathbf{A}^2_{4, 2} - \mathbf{A}^2_{1, 2} + \mathbf{A}^2_{2, 2}) \cdot (\mathbf{B}^2_{2, 1} + \mathbf{B}^2_{2, 2} + \mathbf{B}^2_{2, 3} + \mathbf{B}^2_{2, 4})$ \\

            \hline
            $\mathbf{M}_7$ & $(\mathbf{A}^1_{1,2} - \mathbf{A}^1_{2,2}) \cdot (\mathbf{B}^1_{2,1} + \mathbf{B}^1_{2,2})$  \\ 

            \hline
        \end{tabular}
    \end{center}
    \caption{Equations for the two-level Partial Strassen algorithm.}
    \label{TwoLevel}
\end{table}

\section{Implementations}\label{Implementations}

During the development of this algorithm, many versions were implemented and compared against one another. All versions created will be open sourced in the StrassOPen.jl GitHub repository~\cite{strassOPen}.

\subsection{Alternate Storage Method}\label{167method}

In this paper, we discussed a version of the algorithm which stored the produced $\mathbf{M}_i$ matrices as outlined in Eq.~\eqref{Cstore} and Eq.~\eqref{Dstore}. There is, however, at least one other option where submatrices are stored as 

\begin{equation}\label{Cstore-variant}
    \mathbf{C} \leftarrow \left(\begin{array}{cc}
        \mathbf{M}_4 & \mathbf{M}_5 \\
        \mathbf{M}_2 & \mathbf{M}_3
    \end{array}\right)
\end{equation}
and
\begin{equation}\label{Dstore-variant}
    \mathbf{\mathcal{M}} \leftarrow \left(\begin{array}{cc}
        \mathbf{M}_1 & \mathbf{M}_6 \\
        \mathbf{M}_7 & 0
    \end{array}\right)
\end{equation}
where $\mathbf{M}_1$, $\mathbf{M}_6$, and $\mathbf{M}_7$ stored in $\mathbf{\mathcal{M}}$ instead of $\mathbf{M}_1$, $\mathbf{M}_3$, $\mathbf{M}_7$.

This choice has an upside and a downside. Notice that from Eqs.~\eqref{Ms} that only the $\mathbf{M}_i$ matrices which require a single addition operation are split. This reduces the total number of additions involved. However, this choice also required another buffer to be used. This can be done with a single register on each sequential addition operation in the recombination step.

Due to this addition structure, this alternative storage method may be advantageous for GPU implementations since the addition operations performed in the computation of each $\mathbf{M}_{\ldots,i}$ at each level can be standardized such that each thread can follow the same operation set rather than having branching conditional statements (which can create significant slowdowns on some GPU hardware).

The added complexity added by this choice and the fact that Strassen variants usually run to limited depth led us to favor the storage method of Eq.~\eqref{Cstore} and Eq.~\eqref{Dstore} for the single threaded implementation. Both storage methods were implemented for a single-threaded version of the algorithm, and no significant difference was found during preliminary tests.

\subsection{Tree method vs. List method}

In line with Fig.~\ref{LargeThreadTree}, one proposed implementation was to create a tree data structure with metadata about which operation to perform at what point in the algorithm. In theory, this design is algorithmically pleasing. Though the increased overhead incurred by accessing various data types and the use of mutex locks to prevent race conditions within the tree structure during multi-threading caused this approach to prove impractical when compared against simpler implementations.

In the end, a list method was chosen such that metadata could be accessed efficiently. In this approach, levels of the tree can be split into multiple lists for multi-threading application. This method proved to incur slightly less overhead in terms of computation time and memory usage.

\subsection{Implementation Details}

As mentioned, the final implementation of this algorithm was implemented with storage according to Eq.~\eqref{Cstore} and Eq.~\eqref{Dstore}, and with operation metadata stored in list format to reduce overhead. Equations are obtained algorithmically such that they need only be computed once for a desired depth. Then dictionaries can be initialized for a given pair of input matrices such that metadata can be used to access the range of indices to operate over for a given matrix computation.

The use of dictionaries reduces the number of computations which must be performed during execution, but is not strictly necessary. The algorithm would perform similarly without the use of dictionaries, but the crossover point may increase due to increased overhead.

For multi-threading, simply applying BLAS with a different number of threads at each level would create additional overhead. So instead we implemented all additions and matrix multiplications in a single-threaded manner. When multiple threads are to be applied to a matrix computation of Eqs.~\eqref{Ms}, we chose to split the matrix into vertical and horizontal slices rather than subblocks. For computations according to Eqs.~\eqref{C}, we chose to add subblocks, but this could be changed trivially.

\end{appendix}

\end{document}